\newtheorem{theorem}{Theorem}[section]
\newtheorem{proposition}[theorem]{Proposition}
\newtheorem{lemma}[theorem]{Lemma}
\newtheorem{definition}[theorem]{Definition}
\newtheorem{remark}[theorem]{Remark}
\newtheorem{observation}[theorem]{Observation}
\DeclareMathOperator{\init}{init}
\DeclareMathOperator{\out}{out}
\DeclareMathOperator{\up}{up}
\newcommand{\mem}{\mathcal{M}}
\DeclareMathOperator{\last}{last}
\newcommand{\distrib}[1]{\Delta(#1)}
\newcommand{\states}{\mathbf{S}}
\newcommand{\actions}{\mathbf{A}}
\DeclareMathOperator{\val}{val}
\DeclareMathOperator{\maxmin}{maxmin}
\DeclareMathOperator{\minmax}{minmax}
\newcommand{\ee}[3]{\mathbb E_{#1}^{#2}\left[#3\right]}
\newcommand{\eed}[1]{\ee s {\sigma,\tau} {#1}}
\newcommand{\pp}[3]{\mathbb{P}_{#1}^{#2}\left({#3}\right)}
\newcommand{\ppd}[1]{\pp s {\sigma,\tau} {#1}}
\newcommand{\ppdd}{\mathbb P_s^{\sigma,\tau}}
\newcommand{\arena}{\A}
\newcommand{\reset}[1]{\hat{#1}}
\newcommand{\colors}{\mathbf C}
\newcommand{\psmall}[2]{\psmalln\left(#1,#2\right)}
\newcommand{\psmalln}{p}
\newcommand{\pay}{f}
\newcommand{\mdisc}{\pay_{\text{disc}}}
\newcommand{\parite}{\pay_\text{par}}
\newcommand{\paylimsup}{\pay_\text{lsup}}
\newcommand{\payliminf}{\pay_\text{linf}}
\newcommand{\payposavg}{\pay_\text{posavg}}
\newcommand{\mean}{\pay_\text{mean}}
\newcommand{\zopt}{\sharp}
\newcommand{\sopt}{\ensuremath{\sigma^\zopt}}
\newcommand{\topt}{\ensuremath{\tau^\zopt}}
\newcommand{\game}{\ensuremath{\mathbf{G}}}
\newcommand{\pref}{\ensuremath{\preceq}}
\newcommand{\spref}{\ensuremath{\prec}}
\newcommand{\nat}{\mathbb N}
\newcommand{\rel}{\mathbb R}
\newcommand{\A}{\ensuremath{\mathcal{A}}} 
\newcommand{\ie}{{\em i.e.}\xspace}
\newcommand{\eg}{{\em e.g.}\xspace}
\newcommand{\set}[1]{\{#1\}}
\newcommand{\st}{\ :\ }
\newcommand\defeq{\mathrel{\overset{\makebox[0pt]{\mbox{\normalfont\tiny\sffamily def}}}{=}}}
\DeclareMathOperator{\Stay}{Stay}
\newcommand{\stayone}{\Stay_\omega(\game_1)}
\newcommand{\staynone}{\Stay_{\geq n}(\game_1)}
\newcommand{\staytwo}{\Stay_\omega(\game_2)}
\newcommand{\stayntwo}{\Stay_{\geq n}(\game_2)}
\newcommand{\stayntwog}[1]{\Stay_{\geq #1}(\game_2)}
\DeclareMathOperator{\switch}{Switch}
\begin{document}

\title{Submixing and Shift-Invariant Stochastic Games}

\author[1]{Hugo Gimbert}
\affil[1]{CNRS, LaBRI, Universit\'e de Bordeaux, France}

\author[2]{Edon Kelmendi}
\affil[2]{Max Planck Institute for Software Systems, Germany}

\maketitle

\begin{abstract}
  We study optimal strategies in two-player stochastic games that are played on a finite graph, equipped with a general payoff function. The existence of optimal strategies that do not make use of memory and randomisation is a desirable property that vastly simplifies the algorithmic analysis of such games. Our main theorem gives a sufficient condition for the maximizer to possess such a simple optimal strategy. The condition is imposed on the payoff function, saying the payoff does not depend on any finite prefix (shift-invariant) and combining two trajectories does not give higher payoff than the payoff of the parts (submixing). The core technical property that enables the proof of the main theorem is that of the existence of $\epsilon$-subgame-perfect strategies when the payoff function is shift-invariant. Furthermore, the same techniques can be used to prove a finite-memory transfer-type theorem: namely that for shift-invariant and submixing payoff functions, the existence of optimal finite-memory strategies in one-player games for the minimizer implies the existence of the same in two-player games. We show that numerous classical payoff functions are submixing and shift-invariant. 
\end{abstract}
\maketitle
\newpage
\tableofcontents
\newpage
\section{Introduction}
\label{sec:intro}

The games that we study are played between two players on a finite graph. Every vertex of the graph belongs to one of the players, the one that decides which edge should be taken next. The result of such a play is an infinite path in the graph. The objective of the game is given using a payoff function, which maps infinite paths to real numbers. The maximizer or Player 1, wants to maximize the payoff, while his adversary (the minimizer) wants the opposite.

The study of such games has been an active area of research for a few decades, in a variety communities; especially in that of theoretical computer science and economics. They are used to model simplified adversarial (zero-sum) situations. In computer science they are used in verifying properties of systems, but also as a very beneficial theoretical tool in logic and automata theory.

In this paper we consider \emph{stochastic} games, a more general model where in every step, after an action is chosen, there is a probability distribution on the set of vertices according to which the next vertex is chosen. In this scenario, Player 1 wants to maximize the \emph{expected~payoff}, and his adversary to minimize it. 

Well-known examples of games played on graphs are the discounted games, mean-payoff games, games equipped with the limsup payoff function and parity games. These four classes of games share a common property: both players have very simple optimal strategies, namely optimal strategies that are both deterministic and stationary. These are strategies that guarantee maximal expected payoff and choose actions deterministically (without randomisation) and this deterministic choice depends only on the current vertex (it does not use memory). When games admit such strategies for the maximizer they are called \emph{half-positional}, when they admit such strategies for both players they are called \emph{positional}. This property is highly desirable and it is often the starting point for further algorithmic analysis.

The broad purpose of the present paper is to study what is the common quality of games that makes it possible for them to admit deterministic and stationary optimal strategies.

\paragraph*{Context.}
There have been numerous papers about the existence of deterministic and stationary optimal strategies in games with different payoff functions.  Shapley proved that stochastic games with discounted payoff function are positional using an operator approach~\cite{shapley}.  Derman showed the positionality of one-player games with expected mean-payoff reward, using an Abelian theorem and a reduction to discounted games~\cite{derman}. Gilette extended Derman's result to two-player games~\cite{gilette} but his proof was found to be wrong and corrected by Ligget and Lippman~\cite{liggettlippman}.  The positionality of one-player parity games was addressed in~\cite{CourYan:1990} and later on extended to two-player games in~\cite{ChJurHen:2003b, zielonka:2004}.  Counter games were extensively studied in~\cite{onecountergames} and several examples of positional counter games are given.  There are also several examples of one-player and two-player positional games in~\cite{1jpos, Zielonka10}.  A whole zoology of half-positional games is presented in~\cite{thesekop} and another example is given by mean-payoff co-B{\"u}chi games~\cite{meanpayoffparity}.  The proofs of these various results are quite heterogeneous, making it difficult to find a common property that explains why they are positional or half-positional.

Some effort has been made to better understand conditions that make games (half) positional, which has made apparent that payoff functions that are shift-invariant and submixing play a crucial role. Our contributions lie in this direction.

\paragraph*{Contributions.}
The results of the present paper can be summarised as follows. 

First, the main theorem says that a sufficient condition for the game to be half-positional is for the payoff function to be shift-invariant and submixing. We give an informal explanation of this condition. Payoff functions $f$ map infinite paths of the graph 
\begin{align*}
  s_0s_1s_2s_3\cdots
\end{align*}
to real numbers. A payoff function is \emph{shift-invariant} if it does not depend on finite prefixes, in other words
\begin{align*}
  f(p\ s_0s_1s_2s_3\cdots) = f(s_0s_1s_2s_3\cdots),
\end{align*}
for any finite prefix $p$, \ie we can shift the trajectory to the left without changing the payoff. A payoff function is \emph{submixing} on the other hand, if for any two infinite paths
\begin{align*}
  &{\color{blue}s_0s_1s_2s_3\cdots}\\
  &{\color{red}t_0t_1t_2t_3\cdots}
\end{align*}
shuffling (or combining) them such as
\begin{align*}
  {\color{blue}s_0s_1s_2}&&{\hspace{-1cm}\color{blue}s_3s_4}&&{\color{blue}s_5s_6s_7s_8} \cdots\\
   &\hspace{.5cm}{\color{red}t_0t_1}&&\hspace{1cm}{\color{red}t_2t_3t_4t_5t_6}&&{\color{red}t_7t_8} \cdots
\end{align*}
does not give better payoff, that is:
\begin{align*}
  f({\color{blue}s_0s_1s_2}{\color{red}t_0t_1}{\color{blue}s_3s_4}{\color{red}t_2t_3t_4t_5t_6}{\color{blue}s_5s_6s_7s_8}{\color{red}t_7t_8}\cdots) \le \max\set{f({\color{blue}s_0s_1s_2\cdots}),f({\color{red}t_0t_1t_2\cdots})}.
\end{align*}

\begin{restatable}{theorem}{main}
  \label{thm:main}
  Games equipped with a payoff function that is shift-invariant and submixing are half-positional.
\end{restatable}
As mentioned above, half-positional games are those where the \emph{maximizer} has a simple kind of strategy that is optimal. There is nothing special about this player, if instead of the submixing condition, we define an ``inverse'' submixing condition, namely one that requires that the combined payoff is larger than the minimum of the parts, we would have an analogous theorem that proves the existence of simple optimal strategies for the \emph{minimizer}. Furthermore there are payoff functions for which both versions of the submixing condition hold, and for these games the theorem proves positionality. The conditions in the statement of the theorem are not necessary; we will provide examples and discuss this fact. The proof of \Cref{thm:main} is by induction on number of edges, it uses L\'evy's 0-1 law, as well as the following crucial property of the games under consideration. Namely that games equipped with a payoff function that is both bounded and Borel-measurable admit $\epsilon$-subgame-perfect strategies, for every $\epsilon>0$. A proof of this fact can be found in~\cite{correlated}. 

The second contribution says that having a shift-invariant payoff function is sufficient for the existence of $\epsilon$-subgame-perfect strategies.

\begin{restatable}{theorem}{subgameperfect}
  \label{thm:subgameperfect}
  Games equipped with a payoff function that is shift-invariant, for every $\epsilon>0$, admit $\epsilon$-subgame-perfect strategies.
\end{restatable}
The proof of this theorem uses martingale theory, and takes a large part of the paper, however it is independent of the rest.

A third contribution comes as a corollary of the techniques developed for the main theorem. It is a transfer-type theorem that lifts the existence of optimal finite-memory strategies in one-player games (also known as Markov decision processes) to the same for two-player games. 

\begin{restatable}{theorem}{memory}
  \label{thm:memory}
    \label{thm:mainfinitemem}
Let $f$ be a payoff function that is both shift-invariant and submixing.

Assume that in all games equipped with~$f$ and fully controlled by the minimizer,
for every $\epsilon>0$,
the minimizer has an $\epsilon$-optimal strategy with finite memory.
Then in every (two-player) game, for every $\epsilon>0$, the minimizer has an $\epsilon$-subgame-perfect strategy that has finite memory.

 The statement also holds for $\epsilon=0$, that is: if the minimizer has an optimal strategy with finite memory in every game that he fully controls, then in every (two-player) game as well he has a subgame-perfect strategy with finite memory. 
\end{restatable}

Furthermore this theorem is proved by effectively constructing the $\epsilon$-subgame-perfect strategies in the two-player games. Those are obtained by combining and simplifying $\epsilon$-optimal strategies in one-player games.

A more general result about the transfer of simple class of strategies
for the minimizer from one-player to two-player games is also formulated in \Cref{thm:generaltransfer}.

\paragraph*{Related work.}
For one-player games it was proved by the first author that every one-player game equipped with a payoff function that is both shift-invariant and submixing is positional~\cite{1jpos}. This result was successfully used in~\cite{onecountergames} to prove positionality of counter games. A weaker form of this condition was presented in~\cite{gimbert:zielonka:2004a} to prove positionality of deterministic games (\ie games where transition probabilities are equal to $0$ or $1$, not stochastic). Kopczynski proved that two-player deterministic games equipped with a shift-invariant and submixing payoff function that takes only two values is half-positional~\cite{kopcsl}.

A result of Zielonka~\cite{Zielonka10} provides a necessary and sufficient condition for the positionality of one-player games. The condition is expressed in terms of the existence of particular optimal strategies in multi-armed bandit games.  When trying to prove the positionality for a particular payoff function, the condition in~\cite{Zielonka10} is harder to check than the submixing property which is purely syntactic.


Some results on finite-memory determinacy have been obtained in ~\cite{DBLP:conf/concur/Bouyer0ORV20}, with different requirements: the size of the memory should be independent from the arena, whereas in this paper we do not make such an assumption.

The pre-print version of this present paper~\cite{preprintversion} has already been used in a number of works, mostly pertaining the algorithmic game theory community. We mention the papers that we are aware of. In \cite{chatterjee2016}, Chatterjee and Doyen study payoff functions that are a conjunction of mean-payoff objectives, and prove that they are in co-NP for finite-memory strategies. They use \Cref{thm:main}; and for \Cref{thm:subgameperfect} they observe that in the special case of finite-memory strategies there is a simple combinatorial proof, which bypasses the use of martingale theory. In~\cite{basset2018} the authors consider arbitrary boolean combination of expected mean-payoff objectives and the main theorem of the present paper appears as Theorem~1, and is the starting point of their further algorithmic analysis. Games played on finite graphs where the information flow is perturbed by non-deterministic signalling delays are considered in~\cite{berwanger2015}, where submixing and shift-invariant payoff functions play a central r\^ole. Our results and proof techniques were also used by Mayr, Schewe, Totzke and Wojtczak to establish a finite-memory transfer theorem analogous to the second part of Theorem~\ref{thm:memory} and to prove that games with energy-parity objectives and almost-sure semantics lie in NP $\cap$ co-NP~\cite{mayr2021simple}.

\paragraph*{Organisation of the paper.}

We fix the notation and give the relevant definitions in \Cref{sec:prelims}, where one can also find an overview of the proof. We give examples of shift-invariant and submixing payoff functions in \Cref{sec:applications}, as well as show how the \Cref{thm:main} can be used to recover numerous classical determinacy results. In~\Cref{sec:subgameperfect}, we define reset strategies as a method of obtaining $\epsilon$-subgame-perfect strategies, which exist due to~\Cref{thm:subgameperfect}. The proof of the main theorem, \Cref{thm:main}, is given in \Cref{sec:main theorem}, and that of the transfer theorem for finite-memory strategies, \Cref{thm:memory}, in \Cref{sec:memory}.


\section{Preliminaries}
\label{sec:prelims}

The purpose of this section is to introduce the basic notions that we need about stochastic games with perfect information, that is the definitions of:  games, payoff functions, strategies and values. 
\paragraph{Games}

A game is specified by the \emph{arena} and the \emph{payoff function}.  While the arena determines \emph{how} the game is played, the payoff function specifies the \emph{objectives} that the players want to reach.

We use the following notations throughout the paper. Let $\states$ be a finite set.  The set of finite (respectively infinite) sequences on $\states$ is denoted $\states^*$ (respectively $\states^\omega$). A \emph{probability distribution} on $\states$ is a function $\delta: \states \to [0,1]$ such that $\sum_{s\in\states} \delta(s) =1$.  The set of probability distributions on $\states$, we denote by $\distrib\states$.

\begin{definition}[Arena]
  A stochastic arena with perfect information is a tuple:
  \begin{align*}
    \left(\states,\  \states_1,\  \states_2,\ \actions,\ \left(\actions(s)\right)_{s\in\states},\ \psmalln \right)
  \end{align*}
  where
\begin{itemize}
  \item $\states$ is a finite set of states (that is nodes of the graph) partitioned in two sets $(\states_1,\states_2)$,
  \item $\actions$ is a finite set of actions,
  \item for each state $s\in\states$, a non-empty set $\actions(s)\subseteq\actions$ of actions available in $s$,
  \item and transition probabilities $p:\states\times\actions\to\distrib{\states}$.
\end{itemize}
\end{definition}
An arena is \emph{fully controlled by the minimizer} if $\actions(s)$ is a singleton for every $s\in\states_1$.

An \emph{infinite play} in an arena $\arena$ is an infinite sequence $p=s_0a_1s_1a_2\cdots \in(\states\actions)^\omega$ such that for every $n\in\nat$, $a_{n+1}\in\actions(s_n)$. A \emph{finite play} in $\arena$ is a finite sequence in $\states(\actions\states)^*$ which is the prefix of an infinite play.

With each infinite play is associated a payoff computed by a \emph{payoff function}.  Player~1 (the maximizer) wants to maximize the expected payoff while Player~2 (the minimizer) has the exact opposite preference. Formally, a payoff function for the arena $\arena$ is a bounded and Borel-measurable function
\begin{align*}
 \pay:(\states\actions)^\omega \to \rel 
\end{align*}
which associates with each infinite play $h$ a payoff $\pay(h)$.

\begin{definition}[Stochastic game with perfect information]
  A stochastic game with perfect information is a pair
  \begin{align*}
    (\arena,\pay)
  \end{align*}
  where $\arena$ is an arena and $\pay$ a payoff function for the arena $\arena$.
\end{definition}

\paragraph{Strategies}

A \emph{strategy} in an arena $\arena$ for Player 1  is a function
\begin{align*}
  \sigma \st (\states\actions)^*\states_1 \to \distrib{\actions} 
\end{align*}
such that for any finite play $s_0a_1\cdots s_n$, and every action $a\in\actions$, if $\sigma(s_0a_1\cdots s_n)(a) > 0$ then the action $a$ belongs to $\actions(s_n)$, \ie the played action is available. Strategies for Player~2 are defined similarly and are typically denoted $\tau$. General strategies can have infinite memory as well as randomise among the available actions at every step. We are interested in a very simple sub-class of strategies, namely those that do not use any memory, or randomisation. 

\begin{definition}[Deterministic and stationary strategies]
  \label{def:strats}
  A strategy $\sigma$ for Player 1 is \emph{deterministic} if for every finite play $h\in(\states\actions)^*\states_1$ and action $a\in\actions$,
\begin{align*}
  \sigma(h)(a)>0 \qquad \Leftrightarrow\qquad \sigma(h)(a)=1.
\end{align*}
A strategy $\sigma$ is \emph{stationary} if $\sigma(h)$ only depends on the last state of $h$. In other words $\sigma$ is stationary if for every state $t\in \states_1$ and for every finite play $h=s_0a_1\cdots a_k t$,
\begin{align*}
  \sigma(h)=\sigma(t).  
\end{align*}
\end{definition}

Given an initial state $s\in\states$ and strategies $\sigma$ and $\tau$ for players $1$ and $2$ respectively, the set of infinite plays that start at state $s$ is naturally equipped with a sigma-field and a probability measure denoted $\ppdd$ that are defined as follows.  Given a finite play $h$ and an action $a$, the set of infinite plays $h(\actions\states)^\omega$ and $ha(\states\actions)^\omega$ are \emph{cylinders} that we abusively denote $h$ and $ha$.  The sigma-field is the one generated by cylinders and $\ppdd$ is the unique probability measure on the set of infinite plays that start at $s$ such that for every finite play $h$ that ends in state $t$, for every action $a\in\actions$ and state $r\in\states$,
\begin{align}
  \ppd{ha \mid h} &=
                     \begin{cases}
                       \sigma(h)(a)& \text{ if $t\in \states_1$},\\
                       \tau(h)(a) & \text{ if $t\in \states_2$},\\
                     \end{cases}\\
  \ppd{har \mid ha} &=\psmall{t,a}{r}.
\end{align}
For $n\in\nat$, we denote $S_n$ and $A_n$ the random variables defined by
\begin{align*}
  S_n(s_0a_1s_1\cdots) &\defeq s_n,\\
  A_n(s_0a_1s_1\cdots) &\defeq a_n.
\end{align*}

\paragraph{Values and optimal strategies}
Let $\game$ be a game with a bounded measurable payoff function $\pay$. The expected payoff associated with an initial state $s$ and two strategies $\sigma$ and $\tau$ is the expected value of $\pay$ under $\ppdd$, denoted $\eed\pay$. The \emph{maxmin} and \emph{minmax} values of a state $s\in\states$ in the game $\game$ are:
\begin{align*}
\maxmin(\game)(s) &\defeq \sup_\sigma \inf_\tau \eed\pay,\\
\minmax(\game)(s) &\defeq \inf_\tau\sup_\sigma \eed\pay.
\end{align*}

By definition of $\maxmin$ and $\minmax$, for every state $s\in\states$, $\maxmin(\game)(s) \leq  \minmax(\game)(s)$. As a corollary of the Martin's determinacy theorem for Blackwell games~\cite[Section 1]{martin}, the converse inequality holds as well:
\begin{theorem}[Martin's second determinacy theorem, {\cite[Section 1]{martin}}]
  \label{thm:martin}
Let $\game$ be a game with a Borel-measurable and bounded payoff function $\pay$.
Then for every state $s\in\states$:
\begin{align*}
  \val(\game)(s)\defeq\maxmin(\game)(s)= \minmax(\game)(s).
\end{align*}
This common value is called the value of state $s$ in the game $\game$ and denoted $\val(\game)(s)$.
\end{theorem}
The existence of a value guarantees the existence of $\epsilon$-optimal strategies for both players and every $\epsilon>0$.
\begin{definition}[Optimal and $\epsilon$-optimal strategies]
Let $\game$ be a game, $\epsilon>0$ and $\sigma$ a strategy for Player 1.
Then $\sigma$ is $\epsilon$-optimal
if for every strategy $\tau$
and every state $s\in\states$,
\begin{align*}
  \ee s {\sigma,\tau} \pay  \geq \minmax(\game)(s) - \epsilon.
\end{align*}
The definition for Player~2 is symmetric.  A $0$-optimal strategy is simply called \emph{optimal}.
\end{definition}
A stronger class of $\epsilon$-optimal strategies are $\epsilon$-subgame-perfect strategies, which are strategies that are not only $\epsilon$-optimal from the initial state $s$ but stay $\epsilon$-optimal throughout the game. More precisely, given a finite play $h=s_0\cdots s_n$ and a function $g$ whose domain is the set of (in)finite plays, by $g[h]$ we denote the function $g$ \emph{shifted} by~$h$:
\begin{align*}
  g[h](t_0a_1t_1\cdots)\defeq
  \begin{cases}
    g(h a_1t_1\cdots) &\text{ if $s_n=t_0$},\\
    g(t_0a_1t_1\cdots) &\text{ otherwise.}
  \end{cases}
\end{align*}
\begin{definition}[$\epsilon$-Subgame-Perfect Strategy]
  \label{def:subgameperfect}
  Let $\game$ be a game equipped with a payoff function $\pay$. A strategy $\hat\sigma$ for Player~1 is said to be $\epsilon$-subgame-perfect if for every finite play $h:=s_0\cdots s_n$,
  \begin{align*}
    \inf_\tau\ee {s_n} {\hat\sigma[h],\tau} {f[h]}\ge \sup_{\sigma}\inf_\tau\ee{s_n}{\sigma,\tau}{f[h]}-\epsilon. 
  \end{align*}
\end{definition}

\paragraph{Shift-invariant and submixing}
Without loss of generality we can assume that there is a finite set $\colors$ (colours assigned to the states of the game) such that the payoff function $f$ is a function
\begin{align*}
  f\st \colors^\omega\to \rel,
\end{align*}
that is Borel-measurable and bounded. We define the two conditions with respect to such payoff functions.

\begin{definition}[Shift-Invariant]
  \label{def:shift invariant}
  The payoff function $f$ is shift-invariant if and only if for all finite prefixes $p\in \colors^*$ and trajectories $u\in\colors^\omega$,
  \begin{align*}
    f(p\ u)=f(u). 
  \end{align*}
\end{definition}
Note that shift-invariance is a stronger condition than saying: if one can get $u'\in\colors^\omega$ from $u\in\colors^\omega$ by replacing finitely many letters then $f(u)=f(u')$. Sometimes in the literature this stronger condition is called ``prefix-independent'' or ``tail-measurable''. Intuitively shift-invariant payoff functions are such that they only measure asymptotic properties, and do not talk about indices.

A factorisation of $u\in\colors^\omega$ is a sequence $u_1,u_2,\ldots$ of non-empty finite words (\ie elements of $\colors^+$) such that
\begin{align*}
  u=u_1u_2u_3\cdots.
\end{align*}
For ${\color{blue}u},{\color{red}v}, w\in\colors^\omega$, we say that $w$ is a shuffle of ${\color{blue}u}$ and ${\color{red}v}$ if there are respective factorisations ${\color{blue} u_1},{\color{blue} u_2},\ldots$, and ${\color{red} v_1},{\color{red} v_2},\ldots$ such that
\begin{align*}
  w={\color{blue} u_1}\ {\color{red} v_1}\ {\color{blue} u_2}\ {\color{red} v_2}\cdots .
\end{align*}
\begin{definition}[Submixing]
  The payoff function $f$ is submixing if and only if for all ${\color{blue} u}, {\color{red} v}, w\in\colors^\omega$ such that $w$ is a shuffle of ${\color{blue} u}$ and ${\color{red} v}$ we have
  \begin{align*}
    f(w)\le\max\set{f({\color{blue} u}), f({\color{red} v})}.
  \end{align*}
\end{definition}
The submixing condition says that one cannot shuffle two losing trajectories to make a winning one. This requirement simplifies the kind of strategies that the players need.

The submixing condition is not symmetric over the players, and it implies different results for different players (notice the difference between \Cref{thm:main} and \Cref{thm:memory}). We define the inverse-submixing condition which is its reflection about the players:
\begin{definition}[Inverse-Submixing]
  The payoff function $f$ is inverse-submixing if and only if for all ${\color{blue} u}, {\color{red} v}, w\in\colors^\omega$ such that $w$ is a shuffle of ${\color{blue} u}$ and ${\color{red} v}$ we have
  \begin{align*}
    f(w)\ge\min\set{f({\color{blue} u}), f({\color{red} v})}.
  \end{align*}
\end{definition}
There are payoff functions that are both submixing and inverse-submixing (\eg the parity function); for such payoffs \Cref{thm:main} implies simple optimal strategies for both players, \ie positionality.

\section{Applications and Examples}
\label{sec:applications}
In this section we give a variety of examples of payoff functions that are shift-invariant and submixing, some of them very well-known, others less so. Thus we unify a number of classical positional determinacy results and also sketch how straightforward it is to apply \Cref{thm:main} to novel payoff functions. Furthermore, we comment on the hypothesis of \Cref{thm:main}: Are the conditions necessary? What do they imply about the optimal strategies of the minimizer? Under what operations is this class of payoff functions closed? We start by listing a few well-known examples.

\subsection{Unification of Classical Results}
\label{sec:unification}

 The \emph{mean-payoff function} has been introduced by Gilette~\cite{gilette}.  It measures average performances. Each state $s\in\states$ is labeled with an immediate reward $r(s)\in\rel$. With an infinite play $s_0a_1s_1\cdots$ is associated an infinite sequence of rewards $r_0=r(s_0),r_1=r(s_1),\ldots$
and the payoff is:
\begin{align*}
\mean(r_0r_1\cdots) \defeq \limsup_{n}\frac{1}{n+1}\sum_{i=0}^n r_i.
\end{align*}

The \emph{discounted payoff} has been introduced by Shapley~\cite{shapley}.  It measures long-term performances with an inflation rate: immediate rewards are discounted.  Each state $s$ is labeled not only with an immediate reward $r(s)\in\rel$ but also with a discount factor $0\le \lambda(s) <1$.  With an infinite play $h$ labeled with the sequence $(r_0,\lambda_0)(r_1,\lambda_1)\cdots\in(\rel\times[0,1))^\omega$ of daily payoffs and discount factors is associated the payoff:
\begin{align*}
  \mdisc\left((r_0,\lambda_0)(r_1,\lambda_1)\cdots\right) \defeq r_0 + \lambda_0 r_1 + \lambda_0\lambda_1 r_2 +\cdots.
\end{align*}

The \emph{parity condition} is used in automata theory and logics~\cite{dagstuhl}. Each state $s$ is labeled with some color $c(s)\in\set{0,\ldots,d}$.  The payoff is $1$ if the highest color seen infinitely often is even, and $0$ otherwise.  For $c_0c_1\cdots\in\set{0,\ldots,d}^\omega$,
\begin{align*}
  \parite(c_0c_1\cdots)\defeq
  \begin{cases}
    0 \text{ if }\limsup_{n} c_n \text{ is even,}\\
    1 \text{ otherwise.}
  \end{cases}
\end{align*}

The \emph{limsup payoff function} has been used in the theory of gambling games~\cite{maitrasudderth}.  States are labeled with immediate rewards and the payoff is the limit supremum of the rewards:
\begin{align*}
 \paylimsup(r_0r_1\cdots) \defeq \limsup_n r_n.
\end{align*}
The \emph{liminf payoff function} can be defined similarly. 

The two following propositions follow easily from \Cref{thm:main}. 
\begin{proposition}
  \label{prop:exemplesconcavite}
  The payoff functions $\paylimsup$, $\payliminf$, $\parite$ and $\mean$ are shift-invariant and submixing. Moreover $\paylimsup$, $\payliminf$, and $\parite$ are inverse-submixing as well.
\end{proposition}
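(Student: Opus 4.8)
\textbf{Proof plan for Proposition~\ref{prop:exemplesconcavite}.}

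The plan is to check the two properties separately for each of the four payoff functions. Shift-invariance is immediate in every case: all four functions ($\limsup$ of rewards, $\liminf$ of rewards, the parity of $\limsup$ of priorities, and the $\limsup$ of Cesàro averages) depend only on the tail behaviour of the colour sequence, so prepending a single colour $c$ changes nothing. This is a one-line verification of~\eqref{eq:defshift-invariant} for each function, and I would dispatch all four at once.

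The substance is submixing. Fix $u,v\in\coul^\omega$ and a shuffle $w=u_0v_0u_1v_1\cdots$ arising from factorizations $u=u_0u_1\cdots$ and $v=v_0v_1\cdots$; I must show $f(w)\le\max\{f(u),f(v)\}$. For $\paylimsup$: every reward appearing in $w$ appears either in $u$ or in $v$, and moreover $w$ has infinitely many positions lying in the $u$-part and infinitely many in the $v$-part unless one of the factorizations is eventually empty (a degenerate case handled by shift-invariance, since then $w$ has a suffix equal to a suffix of $u$ or of $v$). Hence $\limsup_n w_n = \max\{\limsup_n u_n,\limsup_n v_n\}$, which is even an equality, so submixing holds. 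For $\payliminf$ the same kind of argument gives $\liminf_n w_n \le \max\{\liminf_n u_n,\liminf_n v_n\}$: pick the index $i\in\{0,1\}$ achieving the larger $\liminf$ among $u,v$; beyond some point all rewards of $u$ and all rewards of $v$ exceed $\min\{\liminf u,\liminf v\}-\delta$... actually it is cleaner to note that along the infinitely many $w$-positions coming from the part with the \emph{smaller} $\liminf$, the values have $\liminf$ equal to that smaller one, so $\liminf_n w_n \le \min\{\liminf u,\liminf v\}\le\max\{\cdots\}$. For $\parite$: the highest priority occurring infinitely often in $w$ is the maximum of the highest priority occurring infinitely often in $u$ and the one in $v$ (again modulo the degenerate case); if this maximum is even then both contributing maxima are even, so $\parite(u)=\parite(v)=0\ge\parite(w)$, and if it is odd then at least one of $\parite(u),\parite(v)$ equals $1$, so $\max\{\parite(u),\parite(v)\}=1\ge\parite(w)$. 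In all cases $\parite(w)\le\max\{\parite(u),\parite(v)\}$.

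The genuinely delicate case is $\mean$, because averages, unlike $\limsup$/$\liminf$ of single terms, are sensitive to the \emph{proportions} in which the two sequences are interleaved. The key combinatorial fact is that a prefix of $w$ of length $N$ is the concatenation of a prefix of $u$ of some length $p$ and a prefix of $v$ of length $q=N-p$, so $\frac{1}{N}\sum_{i<N} w_i = \frac{p}{N}\cdot\frac1p\sum_{i<p}u_i + \frac{q}{N}\cdot\frac1q\sum_{i<q}v_i$ is a convex combination of a Cesàro average of $u$ and a Cesàro average of $v$; hence it is at most $\max\{\frac1p\sum_{i<p}u_i,\frac1q\sum_{i<q}v_i\}$. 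Taking $\limsup$ over $N$ and using that both $p\to\infty$ and $q\to\infty$ along any subsequence realizing $\limsup$ (except in the degenerate case where one part is finite, handled as before), one gets $\mean(w)\le\max\{\limsup_p \frac1p\sum_{i<p}u_i,\ \limsup_q \frac1q\sum_{i<q}v_i\}=\max\{\mean(u),\mean(v)\}$. The only subtlety I expect is making the $\limsup$ bookkeeping rigorous: along the subsequence $(N_k)$ achieving $\mean(w)$, write $N_k=p_k+q_k$; pass to a further subsequence along which $p_k/N_k$ converges to some $\lambda\in[0,1]$ and (if $\lambda>0$) $\frac1{p_k}\sum u$ has a $\limsup$ bounded by $\mean(u)$ and likewise for $v$, then take the limit in the convex-combination bound. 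This $\limsup$-along-subsequences argument is the main obstacle, but it is routine once set up carefully; everything else is bookkeeping.
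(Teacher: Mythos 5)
Your proof is correct in substance. Note, however, that the paper itself does not prove Proposition~\ref{prop:exemplesconcavite}: it declares it an elementary exercise and refers to~\cite{1jpos,mathese}, so there is no in-paper argument to compare against; what you wrote is the standard verification that those references contain. Your key observations are the right ones: for $\paylimsup$, $\payliminf$ and $\parite$, the positions of the shuffle $w$ split into the (infinite) set of $u$-positions and the (infinite) set of $v$-positions, and for $\mean$, every prefix-average of $w$ is a convex combination of a prefix-average of $u$ and one of $v$, after which one bounds by the maximum and passes to the limit superior along a subsequence where both prefix lengths tend to infinity.

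Two small remarks. First, the ``degenerate case'' you keep setting aside (one factorization eventually empty, or one part contributing only finitely many letters) cannot actually occur under Definition~\ref{def:submixing}: $u$ and $v$ are elements of $\coul^\omega$ and the factorizations concatenate to the whole of $u$ and $v$, so both contribute infinitely many positions of $w$ and both prefix lengths $p,q$ tend to infinity with $N$; your case distinction is harmless but vacuous. Second, in the parity argument the intermediate claim ``if the maximum is even then both contributing maxima are even'' is false (e.g.\ the highest priorities seen infinitely often could be $2$ in $u$ and $1$ in $v$); the branch is nonetheless fine because whenever the maximum is even one has $\parite(w)=0\le\max\{\parite(u),\parite(v)\}$ with no further argument, so only the odd branch needs the observation that the word achieving the maximum has odd payoff. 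Finally, in the mean-payoff case the extra extraction of a subsequence with $p_k/N_k\to\lambda$ is unnecessary: from $\frac1{N_k}\sum_{i<N_k}w_i\le\max\bigl\{\frac1{p_k}\sum_{i<p_k}u_i,\ \frac1{q_k}\sum_{i<q_k}v_i\bigr\}$ and $p_k,q_k\to\infty$ one concludes directly, since the limit superior of a maximum is the maximum of the limits superior and limits superior along subsequences are dominated by $\mean(u)$ and $\mean(v)$.
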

\begin{proposition}
  \label{cor:unif}
  In every two-player stochastic game equipped with the parity, limsup, liminf, mean or discounted payoff function, Player~1 has a deterministic and stationary strategy which is optimal. The same is true for Player~2 for the parity, limsup and liminf payoff. 
\end{proposition}
One comment should be made about the discounted payoff function: While it is not shift-invariant, it is possible to reduce games equipped with this function to games with the mean-payoff function, by interpreting discount factors as stopping probabilities as was done in the seminal paper of Shapley~\cite{shapley}.  One can find details of this reduction in~\cite{1jpos,mathese}.

Thus we have unified a number of classical results, thereby giving a common reason for the half-positionality of seemingly unrelated games. The approaches that can be found in the literature for proving that these games are (half-)positional are diverse, as one can see, for example, by consulting the papers \cite{CourYan:1990} and \cite{maitrasudderth} that show positionality for parity games and limsup games, respectively.  The existence of deterministic and stationary optimal strategies in mean-payoff games has a colourful history attached.  The first proof was given by Gilette~\cite{gilette} based on a variant of Hardy and Littlewood theorem.  Later on, Ligget and Lippman found the variant to be wrong and proposed an alternative proof based on the existence of Blackwell optimal strategies plus a uniform boundedness result of Brown~\cite{liggettlippman}.  For one-player games, Bierth~\cite{bierth} gave a proof using martingales and elementary linear algebra while~\cite{lp} provided a proof based on linear programming and a modern proof can be found in~\cite{stochasticgames} based on a reduction to discounted games and the use of analytical tools.  For two-player games, a proof based on a transfer theorem from one-player to two-player games can be found in~\cite{mathese,gimbert:hal-00438359,gimbert:hal-004383592}.

\subsection{Other Examples}

We mention a few more recent examples of games. 

One-counter stochastic games have been introduced in~\cite{onecountergames},
in these games each state $s\in \states$
is labeled by a relative integer $c(s)\in \mathbb{Z}$.
Three different winning conditions were defined and studied in~\cite{onecountergames}:
\begin{align}
  \label{def:limsuppos}&\limsup_n \sum_{0\leq i \leq n} c_i = + \infty\\ 
  \label{def:liminfneg}&\limsup_n \sum_{0\leq i \leq n} c_i = - \infty\\
  \label{def:posavg}&\mean (c_0c_1\ldots)  > 0
\end{align}
The \emph{positive average condition} defined by~\eqref{def:posavg} is a variant of mean-payoff payoff, which may be more suitable to model quality of service constraints or decision makers with a loss aversion. One can naturally defined a payoff function $\payposavg$, that outputs $1$ if the condition holds, and $0$ otherwise. 

Although $\payposavg$ seems similar to the $\mean$ function, maximizing the expected value of $\payposavg$ and doing the same for $\mean$, are two different goals. For example, a positive average maximizer prefers seeing the sequence $1,1,1,\ldots$ for sure rather than seeing with equal probability $\frac{1}{2}$ the sequences $0,0,0,\ldots$ or $3,3,3,\ldots$ while a mean-value maximizer prefers the second situation to the first one.
To the best knowledge of the authors, the classical techniques developed  in~\cite{bierth,stochasticgames,lp} cannot be used to prove positionality of games equipped with the positive average condition. However, since $\payposavg$ can be defined as the composition of the submixing function $\mean$ with an increasing function it is submixing itself. As a consequence of the main theorem of the present paper, it then follows that games that are equpped with $\payposavg$ are half-positional.

Another recent example are the \emph{generalized mean payoff games}, that were introduced in~\cite{generalizedmeanpayoff}.  Each state is labeled by a fixed number of immediate rewards $\left(r^{(1)},\ldots,r^{(k)}\right)$, which define as many mean payoff conditions $\left(\mean^1,\ldots, \mean^k\right)$.  The winning condition is:
\begin{align}
  \label{eq:generalizedmeanpayof}
  \forall 1\leq i \leq k, \mean^i\left(r^{(i)}_0r^{(i)}_1\ldots\right) > 0.
\end{align}
In the special case of mean-payoff co-B\"uchi games, a subset of the states are called B\"uchi states, and the payoff of Player 1 is $-\infty$ if B\"uchi states are visited infinitely often and the mean-payoff value of the rewards otherwise. One can easily check that such a payoff mapping is shift-invariant and submixing.  Although we do not explicitly handle payoff mappings that take infinite values, it is possible to approximate the payoff function by replacing $-\infty$ by arbitrary small values to prove half-positionality of mean-payoff co-B\"uchi games.

The general payoffs captured by the condition in~\eqref{eq:generalizedmeanpayof} are not submixing, however, a natural variant is: 
\emph{Optimistic generalized mean-payoff games} are defined similarly except the winning condition is
\begin{align*}
  \exists i, \mean^{i} \geq 0.
\end{align*}
It is an exercise to show that this winning condition is submixing.  More generally, if $f_1,\ldots, f_n$ are submixing payoff mappings then $\max\{ f_1,\ldots,f_n \}$ is submixing as well.  As a consequence of this observation and \Cref{thm:main}, games with the optimistic generalized mean-payoff condition are half-positional. Such games are not positional however. One can show that the minimizer requires (finite) memory. Intuitively, he needs to use the memory to remember which dimensions have to be decreased, in order to render the condition false. There are even examples of shift-invariant and submixing payoff functions where the minimizer requires infinite memory to play optimally. Here is one of them.

The set of colours is $\set{a,b}$. The payoff function is equal to $-1$ if and only if the word $w\in\set{a,b}^\omega$ that it inputs contains infinitely many $a$s, infinitely many $b$s, and moreover
\begin{align*}
  w=a^{n_1}ba^{n_2}ba^{n_3}b\cdots,
\end{align*}
is such that $\liminf_\ell n_\ell=\infty$, otherwise it is equal to $0$. 

One final but interesting example of a payoff function that is shift-invariant, submixing, and even inverse-submixing (hence positional for both players in two-players games) is the \emph{positive frequency payoff}. Every state is labeled by a color from a set $C$, each of which has a payoff $u(c)$. An infinite play generates an infinite word of colors:
\begin{align*}
  w\defeq c_0c_1c_2\cdots,
\end{align*}
For a color $c$ and $n\in\nat$ define $\#(c,c_0c_1\cdots c_n)$ to be the number of occurrences of the color $c$ in the prefix $c_0c_1\cdots c_n$. The frequency of the color $c$ in $w$ is defined as:
\begin{align*}
  \mathrm{freq}(c,w)\defeq \limsup_{n\to\infty} \frac{\#(c,c_0c_1\cdots c_n)} n,
\end{align*}
and the payoff
\begin{align*}
  f_{\mathrm{freq}}(w)\defeq \max\set{u(c)\st c\in C,\  \mathrm{freq}(c,w)>0}. 
\end{align*}
Other examples can be found in~\cite{1jpos, thesekop, mathese}, and in the papers cited in the introduction.

\subsection{The Class of Shift-Invariant and Submixing Functions}
In this section we have already used two operators under which the class of shift-invariant and submixing functions is closed:
\begin{itemize}
\item If $f_1,\ldots, f_k$ are shift-invariant and submixing then so is
  \begin{align*}
    f(w)\defeq\max\set{f_1(w),\ldots,f_k(w)}.
  \end{align*}
\item If $f$ is shift-invariant and submixing, and $g$ is an increasing function then
  \begin{align*}
    g\circ f 
  \end{align*}
  is shift-invariant and submixing. 
\end{itemize}
The proofs are routine.

The class of shift-invariant and submixing functions does not seem to have any non-trivial closure property. For example, even though this class is closed under $\max$ above, it is not closed under addition. That is if $f_1$ and $f_2$ are submixing, then $f(w):=f_1(w)+f_2(w)$ need not be. To see this, consider the example with colors $a$ and $b$, and $f_1$ such that it maps to $1$ if $a$ occurs infinitely often, and $0$ otherwise, and $f_2$ defined symmetrically.

Furthermore, neither condition is necessary in~\Cref{thm:main}: discounted games are positional but not shift-invariant, and $\mean$ with $\liminf$ instead of $\limsup$ is positional but not submixing. However, as we have seen, this class contains many interesting payoff functions, and it is the salient property that allows one to prove the existence of positional optimal strategies. Perhaps even more importantly, it is typically trivial to check whether a given payoff function is shift-invariant and submixing. 


\section{$\epsilon$-Subgame-Perfect Strategies}
\label{sec:subgameperfect}

The proof of \Cref{thm:main} hinges on a crucial property of games with perfect information, namely the fact that they admit $\epsilon$-subgame-perfect strategies, for all $\epsilon>0$. 

\subgameperfect*

Note that we cannot lift the shift-invariant hypothesis from \Cref{thm:subgameperfect}. That is, one can easily find an example of a game where there are no $\epsilon$-subgame-perfect strategies, even a game with only one player. 

Note that \Cref{thm:subgameperfect}  is true for arbitrary payoff functions and the weaker notion of $\epsilon$-subgame-perfect strategy, requiring that for every finite play $h=s_0\cdots s_n$,
\begin{align}
  \label{eq:weaker subgameperfect}
  \inf_\tau\ee {s_n} {\sigma[h],\tau} {f[h]}\ge \sup_{\sigma'}\inf_\tau\ee{s_n}{\sigma',\tau}{f[h]}-\epsilon. 
\end{align}
Indeed, this was proved independently by Mashiah-Yaakovi, \cite[Proposition 11]{correlated} for concurrent games. That result implies \Cref{thm:subgameperfect}, since for shift-invariant games, the condition \eqref{eq:weaker subgameperfect} coincides with that of \Cref{def:subgameperfect}. 

The weak and strong notions of conditions coincide when the payoff function is shift-invariant. On the one hand, our proof only works for the strong notion of subgame-perfectness in \Cref{def:subgameperfect}. On the other hand, our proof makes transparent how to construct $\epsilon$-subgame-perfect strategies from $\epsilon/2$-optimal ones, in a way that preserves some important properties of the strategy, notably its use of finite memory.

The proof of the theorem will be symmetric with respect to the players, so we will only show that Player~1 has $\epsilon$-subgame-perfect strategies. We will do this by taking an $\epsilon$-optimal strategy $\sigma$ with some more structure, and using it to construct a \emph{reset} strategy $\reset\sigma$, which will be $2\epsilon$-subgame-perfect. The reset strategy is conceptually very simple: a strategy $\sigma$ is not $2\epsilon$-subgame-perfect if and only if there exists some finite play $h:=s_0\cdots s_n$ such that
\begin{align}
  \label{eq:drop}
  \inf_\tau\ee {s_n}{\sigma[h],\tau}{f[h]} < \val(s_n)-2\epsilon;
\end{align}
the reset strategy simply resets its memory when this happens. We give the formal definitions.
\begin{definition}
  \label{def:drop}The finite play $h:=s_0\cdots s_n$ is called a \emph{$(\epsilon,\sigma)$-drop} if \eqref{eq:drop} holds. We write
  \begin{align*}
    \Delta(\epsilon,\sigma)(h)\qquad\Leftrightarrow\qquad\text{h is a $(\epsilon,\sigma)$-drop}.
  \end{align*}
\end{definition}
It is plain that one can factorise any infinite play into $h_1 h_2\cdots$ where each $h_i$ is a $(\epsilon,\sigma)$-drop, but no strict prefix of $h_i$ is $(\epsilon,\sigma)$-drop. For example:
\begin{center}
\includegraphics[width=.8\textwidth]{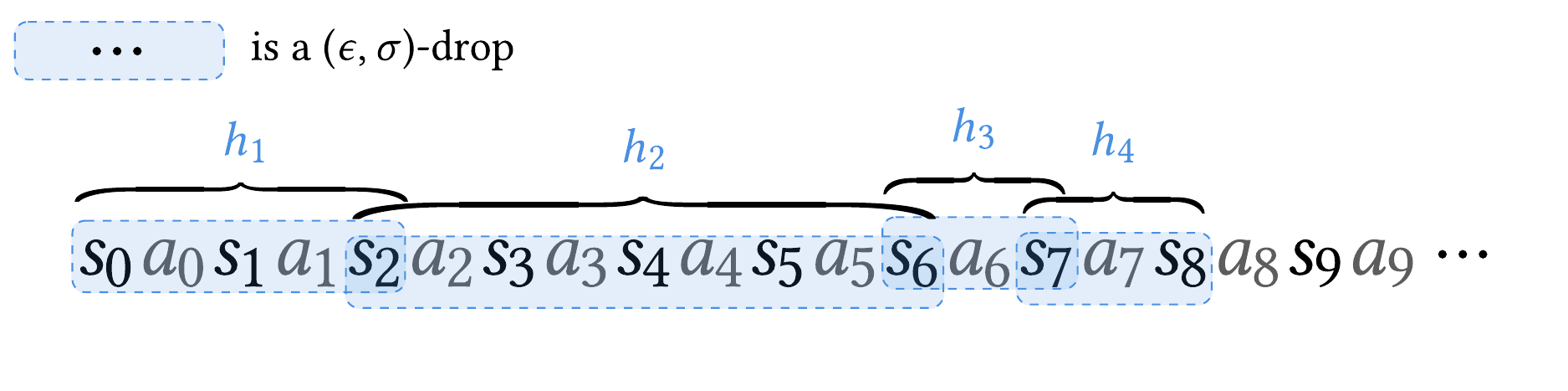}
\end{center}
\begin{definition}
  We define the \emph{date} of the most recent (or latest) drop for all $s_0\cdots s_n$ inductively as:
  \begin{align*}
    \Lambda(\epsilon,\sigma)(s_0)&\defeq 0\\
    \Lambda(\epsilon,\sigma)(s_0\cdots s_n)&\defeq
    \begin{cases}
      n \qquad &\text{if $h$ is a $(\epsilon,\sigma)$-drop}\\
      \Lambda(\epsilon,\sigma)(s_0\cdots s_{n-1})\qquad &\text{otherwise},
    \end{cases}
  \end{align*}
  where
  \begin{align*}
    h\defeq s_{\ell}\cdots s_n,\ \ \ \text{and} \ \ \ \ell \defeq \Lambda(\epsilon,\sigma)(s_0\cdots s_{n-1}).
  \end{align*}
\end{definition}
The date of the most recent drop in the example above looks as follows:
\begin{center}
\includegraphics[width=.8\textwidth]{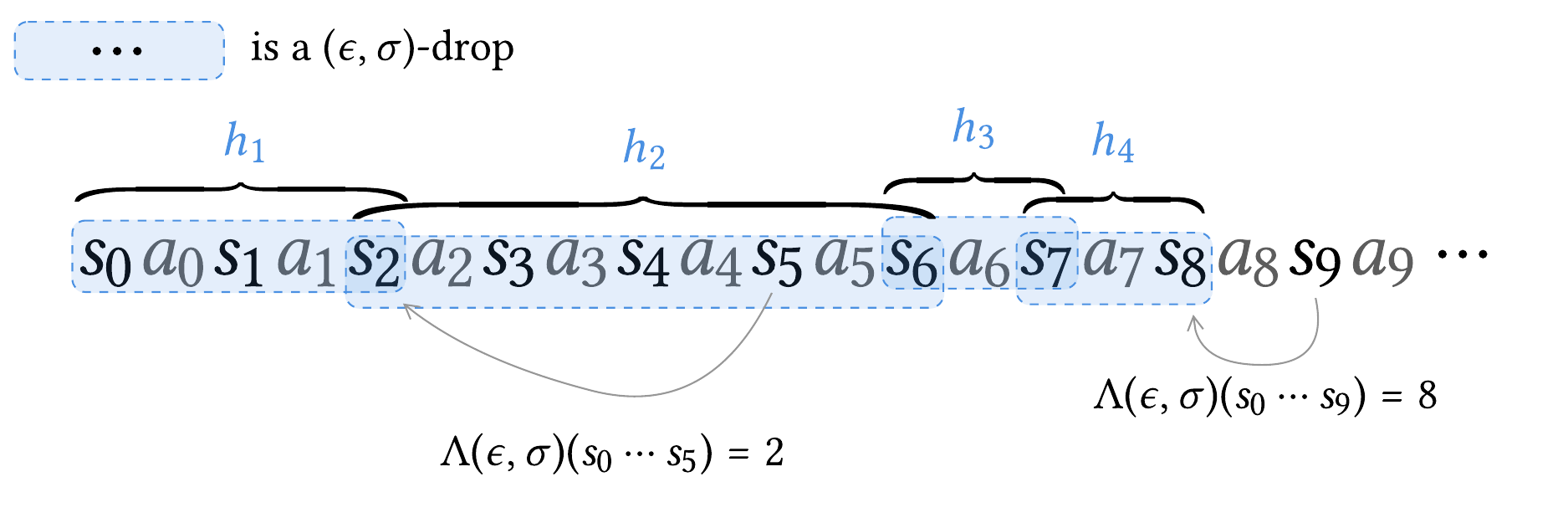}
\end{center}

The reset strategy resets its memory whenever a drop occurs, \ie it keeps the memory since the most recent drop:
\begin{definition}[Reset Strategy]
  For any strategy $\sigma$ we define the reset strategy $\reset\sigma$ as:
  \begin{align*}
    \reset\sigma(s_0\cdots s_n)=\sigma(s_\ell\cdots s_n),
  \end{align*}
  where
  \begin{align*}
    \ell\defeq\Lambda(\epsilon,\sigma)(s_0\cdots s_n).
  \end{align*}
\end{definition}

By construction, the reset strategy has the property that if it is $\epsilon$-optimal then it is also $2\epsilon$-subgame-perfect.

\begin{lemma}
  \label{lem:2e subgame}
  Let $\reset\sigma$ be a reset strategy that is $\epsilon$-optimal, then it is also $2\epsilon$-subgame-perfect. 
\end{lemma}
\begin{proof}
  Let $s_0\cdots s_n$ be a finite play, the goal is to show that:
  \begin{align}
    \label{eq:goal 2e}
    \inf_\tau\ee{s_n}{\reset\sigma[s_0\cdots s_n],\tau} f\ge \val(s_n)-2\epsilon. 
  \end{align}
  If there is a drop occurring in date $n$, that is $\Lambda(\epsilon,\sigma)(s_0\cdots s_n)=n$ then
  \begin{align*}
    \inf_\tau\ee{s_n}{\reset\sigma[s_0\cdots s_n],\tau} f=\inf_\tau\ee{s_n}{\reset\sigma,\tau} f\ge\val(s_n)-\epsilon,
  \end{align*}
  by the definition of a reset strategy that is $\epsilon$-optimal. Assume then that the most recent drop is $\ell<n$, which means that:
  \begin{align}
    \label{eq:to contradict 2e}
    \inf_\tau\ee {s_n} {\sigma[s_\ell\cdots s_n],\tau} f \ge \val(s_n)-2\epsilon,
  \end{align}
  where $\ell=\Lambda(\epsilon,\sigma)(s_0\cdots s_{n-1})$. Towards a contradiction, assume that the goal \eqref{eq:goal 2e} does not hold, \ie there exists a strategy $\tau$ that gives payoff strictly less than $\val(s_n)-2\epsilon$, then we will construct another strategy $\tau'$ that will contradict \eqref{eq:to contradict 2e}.
  
Let $\mathfrak D$ be the set prefixes from $s_\ell$ to the next $(\epsilon,\sigma)$-drop, that is
  \begin{align*}
    \mathfrak D\defeq \set{s_\ell\cdots s_{\ell'}\st s_\ell\cdots s_{\ell'}\text{ is a $(\epsilon,\sigma)$-drop but no strict prefix is}},
  \end{align*}
  and $\overline{\mathfrak D}$ the event that is generated by the cylinders in $\mathfrak D$ (note that the complement $\neg\overline{\mathfrak D}$ is the event that no drop occurs). Define $\tau'$ to be the strategy that plays like $\tau$ except when a prefix in $\mathfrak D$ is met, in which case it switches to the $\epsilon$-response strategy $\tau''$. To simplify the notation let:
  \begin{align*}
    \sigma_1\defeq \reset\sigma[s_0\cdots s_n],\qquad \sigma_2\defeq \sigma[s_\ell\cdots s_n].
  \end{align*}
  From the assumption that the goal does not hold we have the following inequality\footnote{$\mathds{1}_\mathcal{E}$ is the indicator function of the event $\mathcal E$.}
  \begin{align}
    \label{eq:to contradict 2e'}
    \begin{split}
    \val(s_n)-2\epsilon &> \ee{s_n}{\sigma_1,\tau}{f\cdot\mathds{1}_{\overline{\mathfrak D}}}+\ee{s_n}{\sigma_1,\tau}{f\cdot\mathds{1}_{\neg\overline{\mathfrak D}}}\\
                        &=\ee{s_n}{\sigma_1,\tau}{f\cdot\mathds{1}_{\overline{\mathfrak D}}}+\ee{s_n}{\sigma_2,\tau'}{f\cdot\mathds{1}_{\neg\overline{\mathfrak D}}}\\
                        &=\ee{s_n}{\sigma_1,\tau}{f\cdot\mathds{1}_{\overline{\mathfrak D}}}+\ee{s_n}{\sigma_2,\tau'} f - \ee{s_n}{\sigma_2,\tau'}{f\cdot\mathds{1}_{\overline{\mathfrak D}}}.
    \end{split}
  \end{align}
  In the equality the strategy $\sigma_1$, respectively $\tau$, has been replaced by $\sigma_2$, respectively $\tau'$ because on infinite plays without a drop they coincide.

  For the first term above we have:
  \begin{align*}
    \ee{s_n}{\sigma_1,\tau}{f\cdot\mathds{1}_{\overline{\mathfrak D}}} &= \sum_{t_0\cdots t_m\in \mathfrak D}\pp {s_n} {\sigma_1,\tau} {t_0\cdots t_m} \ee {t_m} {\reset\sigma,\tau[t_0\cdots t_m]} f\\
    &\ge \sum_{t_0\cdots t_m\in \mathfrak D}\pp {s_n} {\sigma_1,\tau} {t_0\cdots t_m} (\val(t_m)-e), 
  \end{align*}
  by definition of the $\epsilon$-optimal reset strategy and the fact that $f$ is shift-invariant. For the last term on the other hand we have:
  \begin{align*}
    \ee{s_n}{\sigma_2,\tau'}{f\cdot\mathds{1}_{\overline{\mathfrak D}}} &= \sum_{t_0\cdots t_m\in\mathfrak D}\pp {s_n} {\sigma_2,\tau'} {t_0\cdots t_m} \ee{t_m}{\sigma_2[t_0\cdots t_m],\tau''} f\\
    &\le \sum_{t_0\cdots t_m\in\mathfrak D}\pp {s_n} {\sigma_2,\tau'} {t_0\cdots t_m} (\val(t_m)-2\epsilon+\epsilon),
  \end{align*}
  by construction of the $\epsilon$-response strategy $\tau''$ and $\tau'$. The strategies $\sigma_1$ and $\sigma_2$ on one hand, and $\tau$, $\tau'$ on the other, coincide up to the first drop, consequently we can interchange them when measuring cylinders $t_0\cdots t_m$, which implies that the two inequalities above give:
  \begin{align*}
    \ee {s_n} {\sigma_1,\tau}{f\cdot \mathds{1}_{\overline{\mathfrak D}}}\ge \ee {s_n} {\sigma_2,\tau'}{f\cdot \mathds{1}_{\overline{\mathfrak D}}}.
  \end{align*}
  This contradicts \eqref{eq:to contradict 2e} when plugged it in \eqref{eq:to contradict 2e'}. 
\end{proof}

As a consequence of this lemma, in order to prove \Cref{thm:subgameperfect}, we only have to demonstrate that there exists a reset strategy that is $\epsilon$-optimal. In the rest of this section we will prove that there are strategies with more and more desirable properties, culminating in the proof that there is some $\sigma$ whose reset strategy is $\epsilon$-optimal.

\subsection{Properties of the Reset Strategy}

We will show that there is a strategy $\sigma$ with the following properties:
\begin{enumerate}
\item $\sigma$ is $\epsilon$-optimal,
\item $\sigma$ is locally optimal,\footnote{This means that it does not play an action that decreases the value on average, the precise definition will follow. }
\item for any $\tau$ when playing with $\reset\sigma$ and $\tau$ almost surely there are only finitely many $(\epsilon,\sigma)$-drops, and
\item $\reset\sigma$ is $\epsilon$-optimal. 
\end{enumerate}
We will do this in a manner that accumulates more structure, that is, for strategies with properties 1 and 2 we can prove the third property; and for strategies with all of the first three properties it is possible to prove that the reset strategy is $\epsilon$-optimal. Each subsection below corresponds to the proof of one of the last three properties (Property 1 is a consequence of Martin's theorem \Cref{thm:martin}).

We are going to make use of some results from the theory of martingales\footnote{As a general reference for this area one might use \cite{wiliamspr}.}, which we introduce first.
\begin{definition}[Martingale]
  A sequence of real-valued random variables $X_0,X_1,\ldots$ is called a martingale if for all $n\in\nat$
  \begin{align*}
    \ee{}{}{|X_n|}<\infty,\ \  \text{and}\ \ \ee{}{}{X_{n+1}\ \vert\  X_1,\ldots,X_n}=X_n.
  \end{align*}
  It is called a supermartingale, respectively submartingale, if instead of the equality we have $\ge$, respectively $\le$. 
\end{definition}

In our case the sequence $\val(S_0),\val(S_1),\ldots$ under suitable strategies will be a supermartingale, which will allow us to use in particular the following results.
\begin{theorem}[Doob's Forward Convergence Theorem, {\cite[Theorem 11.5]{wiliamspr}}]
  \label{thm:doob convergence}
  Let $X_0,X_1,\ldots$ be a supermartingale such that the sequence $(\ee{}{}{|X_n|})_{n\in\nat}$ is bounded. Then almost surely the limit
  \begin{align*}
    \lim_{n\to\infty}X_n,
  \end{align*}
  exists and is finite.
\end{theorem}
It follows from the definition of martingales that for all $n\in\nat$, the expected value of $X_n$ is equal to the expected value of $X_0$. In other words, the process that is stopped at time $n$ is on average is equal to the process at time $0$. The next theorem from martingale theory that we will make use of, has an analogous statement, namely that the process stopped at some random time $T$ is on average equal to the process stopped at time zero. This theorem is known as Doob's optional stopping theorem. See for example Section 10.10 in \cite{wiliamspr}. We give a variant of this theorem.
\begin{definition}[Stopping Time]
  A random variable $T$ taking values in $\nat\cup\set\infty$ is called a \emph{stopping time} with respect to random variables $S_0,S_1,\ldots$ if the event $\set{T=n}$ for $n\in\nat$ is $(S_0,\ldots,S_n)$-measurable, meaning that it depends only on the random variables $S_0,\ldots,S_n$. 
\end{definition}
\begin{theorem}[Doob's Optional Stopping Theorem]
  \label{thm:doob optional stopping}
  Let $T$ be a stopping time with respect to the random variables $S_0,S_1,\ldots$ and $(X_n)_{n\in\nat}$ a uniformly bounded martingale such that for all $n\in\nat$, $X_n$ is $(S_0,\ldots,S_n)$-measurable. Define the random variable $X_T$ which represents the process stopped at time $T$ as:
  \begin{align*}
    X_T\defeq
    \begin{cases}
      X_n\qquad &\text{if $T$ is finite and equal to $n$,}\\
      \lim_{n\to\infty}X_n\qquad &\text{if $T=\infty$}.
    \end{cases}
  \end{align*}
  Then the expectation of $X_T$ is equal to that of $X_0$. Analogous statements hold for supermartingales and submartingales.
\end{theorem}
\begin{proof}
  The random variable $X_T$ is well-defined as a consequence of \Cref{thm:doob convergence}. For every $k\in\nat$ define:
  \begin{align*}
    Y_k\defeq X_{\min(T,k)}. 
  \end{align*}
  The process $(Y_k)_{k\in\nat}$ is a uniformly bounded martingale that converges almost-surely, as well. By definition of martingales for all $n\in\nat$
  \begin{align*}
    \ee{}{}{Y_n}=\ee{}{}{Y_0}=\ee{}{}{X_0}.
  \end{align*}
  Furthermore $(Y_k)_{k\in\nat}$ converges pointwise to $X_T$. One can now use Lebesgue's dominated convergence theorem (see for example \cite[Theorem 5.9]{wiliamspr}) to conclude that:
  \begin{align*}
    \ee{}{}{X_T}=\ee{}{}{X_0}. 
  \end{align*}
  When the process is a supermartingale or a submartingale one can write an analogous proof.
\end{proof}
\subsubsection{Locally Optimal}
An action is locally optimal if the average value of the successor states is equal to the value of the current state. Formally:
\begin{definition}[Locally Optimal Strategy]
  An action $a\in\actions(s)$ is called locally optimal if and only if
  \begin{align*}
    \val(s)=\sum_{t\in\states}\psmall{s,a}{t}\val(t).
  \end{align*}
  A strategy that only plays locally optimal actions is called locally optimal.
\end{definition}
The salient point is the following observation about the process $\val(S_0),\val(S_1),\ldots$ when players use locally optimal strategies.
\begin{observation}
  \label{ob:is a martingale}
  When Player~1 (respectively Player~2) uses a locally optimal strategy the process
  \begin{align*}
    \val(S_0),\val(S_1),\ldots
  \end{align*}
  is a supermartingale (respectively a submartingale). 
\end{observation}
This observation readily follows from the definition above and the fact that the values are bounded.

One can get away with playing solely locally optimal actions in games with perfect information. In other words, suppose that the action $a_0\in\actions(s_0)$ (say belonging to Player~1) in game $\game$ is \emph{not} locally optimal, and denote by $\game'$ the same game except that it does not have action $a_0$ in state $s_0$. We will prove that the values of those two games coincide; this then clearly implies that Player~1 has $\epsilon$-optimal strategies that are locally optimal as well. The analogue fact for Player~2 can be proved symmetrically.

Player~1 has less choice in $\game'$, so for every $s\in\states$
\begin{align*}
  \val(\game')(s)\le \val(\game)(s),
\end{align*}
hence we only have to prove the inverse inequality. Towards this end, we first prove that:
\begin{align}
  \label{eq:g' s0}
  \val(\game')(s_0)\ge \val(\game)(s_0). 
\end{align}
Let
\begin{align*}
  \delta\defeq \val(\game)(s_0)-\sum_{t\in\states}\psmall{s_0,a_0} t \val(\game)(t)>0,
\end{align*}
and $\tau$ the strategy that plays according to the strategy $\tau'$ that is $\epsilon$-optimal in $\game'$ --- as long as the opponent does not choose the action $a_0$, in which case it switches definitely to the strategy $\tau''$ which is $\delta/2$-optimal in $\game$. Let $\mathcal Z$ be the event that the action $a_0$ is never chosen, \ie
\begin{align*}
  \mathcal Z \defeq \set{\forall n\ \  S_n=s_0\Rightarrow A_{n+1}\ne a}.
\end{align*}
Then by construction of $\tau$, for all $\sigma$ and $s$:
\begin{align*}
  \eed{f\ \vert\ \mathcal Z} &\le \val(\game')(s)+\epsilon,\text{ and}\\
  \eed{f\ \vert\ \neg\mathcal Z} &\le \val(\game)(s_0)-\delta+\delta/2,
\end{align*}
whence it follows that for all $\sigma$, $s$ and $\epsilon>0$
\begin{align*}
  \eed{f} \le \max\set{\val(\game')(s)+\epsilon,\val(\game)(s_0)-\delta/2}. 
\end{align*}
Taking $s=s_0$ and the supremum over all $\sigma$ gives \eqref{eq:g' s0}.

Using \eqref{eq:g' s0}, we prove now that for all $s$
\begin{align}
  \label{eq:g' s}
  \val(\game')(s)\ge\val(\game)(s). 
\end{align}
Define $\mathcal S(\sigma)$ to be the event that the action $a_0$ is about to be played by strategy $\sigma$, that is
\begin{align*}
  \mathcal S(\sigma)\defeq \set{\exists n\ \ S_n=s_0\text{ and }\sigma(S_0\cdots S_n)(a_0)>0}. 
\end{align*}
Let $\epsilon>0$ and for any strategy $\sigma$, define $\tilde \sigma$ to be the strategy that plays like $\sigma$ unless the latter is about to play the action $a_0$ in $s_0$, in which case it switches to the strategy $\sigma'$ which is $\epsilon$-optimal in $\game'$. Set $\tau$ to be the strategy that plays according to some strategy $\tau'$ which is $\epsilon$-optimal in $\game'$ as long as the opponent does not play the action $a_0$, otherwise it switches to some strategy that is $\epsilon$-optimal in $\game$. By definitions of these strategies and \eqref{eq:g' s0} we have that for all $\sigma$ and $s$
\begin{align*}
  \eed{f\ \vert\ \mathcal S(\sigma)}&\le \val(\game)(s_0)+\epsilon=\val(\game')(s_0)+\epsilon, \text{and}\\
  \ee s {\tilde\sigma,\tau} {f \vert \mathcal S(\sigma)} &\ge \val(\game')(s_0),
\end{align*}
a combination of which gives us
\begin{align}
  \label{eq:switch}
  \eed {f\ \vert\ \mathcal S(\sigma)}\le \ee s {\tilde\sigma,\tau} {f\ \vert\ \mathcal S(\sigma)}+2\epsilon. 
\end{align}
The strategies $\sigma$ and $\tilde\sigma$ on one hand, and $\tau$ and $\tau'$ on the other, coincide up to the date when $\sigma$ is about to play the action $a_0$, as a consequence:
\begin{align*}
  P(\sigma,s)\defeq \ppd {\mathcal S(\sigma)} = \pp s {\tilde\sigma,\tau'}{\mathcal S (\sigma)}. 
\end{align*}
Now by construction of the strategies and \eqref{eq:switch}, for all $\sigma$ and $s$ we have 
\begin{align*}
  \eed f &= P(\sigma,s)\ \eed {f\ \vert\ \mathcal S(\sigma)} + \left(1-P(\sigma,s)\right)\ \eed {f\ \vert\ \neg\mathcal S(\sigma)}\\
         &\le P(\sigma,s)\ \left(\ee s {\tilde\sigma,\tau} {f\ \vert\ \mathcal S(\sigma)} + 2\epsilon\right) + \left(1-P(\sigma,s)\right)\ \eed {f\ \vert\ \neg\mathcal S(\sigma)}\\
         &= \ee s {\tilde\sigma,\tau} {f} + 2\epsilon\ P(\sigma,s)
         = \ee s {\tilde\sigma,\tau'} {f} + 2\epsilon\ P(\sigma,s)\\
         &\le \val(\game')(s)+\epsilon(2P(\sigma,s)+1). 
\end{align*}
Since this holds for any $\epsilon>0$, taking the supremum over all $\sigma$ proves \eqref{eq:g' s}.

We have thus proved that for all $\epsilon>0$, both players have strategies that are both
\begin{align}
  \label{eq:locally optimal}
  \text{locally optimal and $\epsilon$-optimal.}
\end{align}

We gather one more observation about games where at least one of the players utilises a locally optimal strategy. In this case, a stronger type of locally optimal action is the only one played infinitely many times.

\begin{definition}[Value-Conserving Action]
  An action $a\in\actions(s)$ is called value-conserving in $s$ if and only if for all $t\in\states$,
  \begin{align*}
    \psmall{s,a} t > 0\qquad \Rightarrow\qquad \val(s)=\val(t). 
  \end{align*}
\end{definition}

\begin{proposition}
  \label{prop:value-conserving}
  For all strategies $\sigma,\tau$ at least one of which is locally optimal and $s\in\states$ we have
  \begin{align*}
    \ppd {\text{for all but finitely many $n$, $A_n$ is value-conserving in $S_n$}}=1. 
  \end{align*}
\end{proposition}
\begin{proof}
  Fix $\sigma$ and $\tau$ and assume that $\sigma$ is locally optimal, the other case is symmetrical. Suppose that $a_0\in\actions(s_0)$ is not value-conserving. It suffices to prove that the event
  \begin{align*}
    \set{\text{for infinitely many $n$, $S_n=s_0$ and $A_n=a_0$}},
  \end{align*}
  has measure zero. Assume towards a contradiction that the event above has non-zero probability, then the event which says that for infinitely many $n$, we have $S_n=s_0$, $A_n=a_0$ and $S_{n+1}=t$ also has non-zero probability; where $t\in\states$ is a successor state of $s_0$ under $a_0$ that has value strictly smaller than that of $s_0$ (its existence is guaranteed because $a_0$ is not value-conserving). This means that there is non-zero probability that for infinitely many $n$,
  \begin{align*}
    \vert \val(S_n)-\val(S_{n+1})\vert \ge \val(s_0)-\val(t)>0,
  \end{align*}
  which contradicts \Cref{thm:doob convergence}, since $(\val(S_n)), n\in\nat$ is a supermartingale as per \Cref{ob:is a martingale}. 
\end{proof}
\subsubsection{Finitely Many Drops}
\label{sub:finitely many drops}
Recall that $\Delta(\epsilon,\sigma)(\cdot)$ characterises finite plays that are $(\epsilon,\sigma)$-drops. We informally refer to the event
\begin{align*}
  \text{for all $m>n$, }\neg\Delta(\epsilon,\sigma)(S_0\cdots S_m),
\end{align*}
as
\begin{align*}
  \text{no $(\epsilon,\sigma)$-drops after date $n$}.
\end{align*}
Similarly for events such as ``there is a $(\epsilon,\sigma)$-drop'' or ``two $(\epsilon,\sigma)$-drops after date~$n$''. Our goal is to prove that for a reset strategy that is based on a $\sigma$ that is both $\epsilon$-optimal and locally optimal (which exists because of \eqref{eq:locally optimal}) almost surely there will only be finitely many $(\epsilon,\sigma)$-drops. To this end fix a $\epsilon>0$, and $\sigma$ a strategy that is both locally optimal and $\epsilon$-optimal, which allows us to simply say drop instead of $(\epsilon,\sigma)$-drop. The proof of the goal is relatively lengthy, however the idea and the plan is simple.

An intermediate fact that we have to prove is that when Player~1 plays with the reset strategy there is some $n\in\nat$ such that the probability that there is a drop after date $n$ is bounded away from $1$. This fact is easier to prove if we assume that the adversary is using a locally optimal strategy. Then \Cref{prop:value-conserving} helps us lift this restriction on the strategies of Player~2. Therefore the plan is to prove this intermediate fact first (1) for locally optimal strategies, then (2) for strategies $\tau_n$ that are locally optimal after date $n$, and finally (3) for general strategies. 
The intermediate fact then finalises the goal of the preset section, that is when Player~1 plays with the reset strategy $\reset\sigma$ almost surely there will be only finitely many drops. 
\begin{lemma}
  \label{lem:bounded away 1}
  There exists a $c>0$ such that for all $s$ and locally optimal $\tau$,
  \begin{align*}
    \ppd {\text{there is a drop}}\le 1-c.
  \end{align*}
\end{lemma}
\begin{proof}
  Let $T$ be the date of the first drop, that is
  \begin{align*}
    T\defeq \min\set{n\st S_0\cdots S_n\text{ is a drop }}, 
  \end{align*}
  with the convention that $\min\emptyset=\infty$. Notice that $T$ is a stopping time with respect to the process $(\val(S_n)), n\in\nat$. Let $\tau'$ be a strategy that plays like $\tau$ as long as no drop occurs, and once it does it switches to the strategy $\tau''$ that is a $\epsilon/2$-optimal response. By construction, $\tau$ and $\tau'$ coincide on trajectories without drops so define:
  \begin{align*}
    P\defeq \ppd{\text{no drops}} = \pp s {\sigma,\tau'} {\text{no drops}},
  \end{align*}
  and let $M$ respectively $m$, be an upper bound, respectively lower bound of the payoff function $f$. By $\epsilon$-optimality of $\sigma$, for all $s$ we have:
  \begin{align*}
    \val(s)-\epsilon &\le (1-P)\cdot \ee s {\sigma,\tau'} {f\ \vert\ \text{there is a drop}} + P\cdot \ee s {\sigma,\tau'} {f\ \vert\ \text{no drops}}\\
    &\le (1-P)\cdot \ee s {\sigma,\tau'} {f\ \vert\ \text{there is a drop}} + P\cdot M.
  \end{align*}
  Denote by $\mathcal D$ the finite plays that are drops but that do not have a prefix that is a drop, \ie it contains all the finite plays up to the first drop. Then by construction of $\tau'$ we have for all $s$:
  \begin{align*}
    \ee s {\sigma,\tau'} {f\ \vert\ \text{there is a drop}} &= \sum_{s_0\cdots s_n\in\mathcal D}\pp s {\sigma,\tau'} {s_0\cdots s_n\ \vert\ \text{there is a drop}}\cdot \ee {s_n} {\sigma[s_0\cdots s_n],\tau''} f\\
                                                            &\le\sum_{s_0\cdots s_n\in\mathcal D}\pp s {\sigma,\tau'} {s_0\cdots s_n\ \vert\ \text{there is a drop}}\cdot\left(\val(s_n)-2\epsilon+\epsilon/2\right)\\
    &=\ee s {\sigma,\tau'} {\val(S_T)\ \vert\ \text{there is a drop}}-\frac 3 2 \epsilon.
  \end{align*}
  Replacing this inequality in the one above and decomposing the expectation of $\val(S_T)$\footnote{~\Cref{thm:doob convergence} implies that this random variable is well-defined.} we conclude that for all $s$:
  \begin{align*}
    \val(s)-\epsilon &\le \ee s {\sigma,\tau'} {\val(S_T)} + P\cdot \left(M - \ee s {\sigma,\tau'} {\val(S_T)\ \vert\ \text{no drops}}\right)-\frac 3 2 \epsilon(1-P)\\
    &\le \val(s)+P\cdot (M-m) - \frac 3 2 \epsilon(1-P),
  \end{align*}
  where the expectation of $\val(S_T)$ is smaller than the $\val(s)$ for the following reason. Since $T$ is a stopping time and $\tau'$ plays like $\tau$ before the first drop, hence it plays locally optimal actions, consequently the process $\val(S_n), n\in\nat$ is a submartingale at least until the first drop\footnote{formally one defines another process that \emph{stops} after time T, that is a process $\val(S_{\min\set{n,T}})$. }, so we can apply \Cref{thm:doob optional stopping}. Finally from the inequality above we have:
  \begin{align*}
    P\ge \frac 1 2 \frac \epsilon {M-m +3/2\epsilon} \defeq c,
  \end{align*}
  a uniform bound that does not depend on the choice of $\tau$.
\end{proof}

Next we approximate strategies $\tau$ by a sequence $\tau_n$ for every natural $n$ as follows. The strategies $\tau_n$ play like $\tau$ only up to date $n$, otherwise they choose some locally optimal action, formally:
\begin{align*}
  \tau_n(s_0\cdots s_m) \defeq \begin{cases}
    \tau(s_0\cdots s_m)\ \text{if $m<n$ or $\tau(s_0\cdots s_m)$ chooses locally optimal actions},\\
    \text{some locally optimal action in $s_m$ otherwise.}
  \end{cases}
\end{align*}
\begin{lemma}
  \label{lem:tau n}
  There is some $c>0$ such that for all strategies $\tau$, $s$ and $n\in\nat$, we have
  \begin{align*}
    \pp s {\reset\sigma,\tau_n} {\text{there is a drop after date $n$}}\le 1-c. 
  \end{align*}
\end{lemma}
\begin{proof}
  For $n\in\nat$ define the stopping time $T_n$ to be the date of the first drop after the date $n$, that is
  \begin{align*}
    T_n\defeq \min\set{m > n\st S_0\cdots S_m \text{ is a drop}},
  \end{align*}
  with the convention that $\min\emptyset=\infty$, and set $T^2_n$ to be the date of the second drop after $n$, that is $T_{T_n}$. We prove that there is some $c>0$ such that for all $n\in\nat$, strategy $\tau$ and state $s$ we have
  \begin{align}
    \label{eq:two drops}
    \pp {s} {\reset\sigma,\tau_n} {T_n^2<\infty\ \vert\ T_n<\infty}\le 1-c.
  \end{align}
  The statement of the lemma then follows from \eqref{eq:two drops} and sigma-additivity of measures. Define $\mathcal D_n$ to be the set of finite plays, strictly longer than $n$, that are drops but such that they have no prefix longer than $n$ that is a drop. In other words $\mathcal D_n$ contains all the plays up to the first drop after the date $n$. Then by construction of the reset strategy:
  \begin{align*}
    \pp {s} {\reset\sigma,\tau_n} {T_n^2<\infty\ \vert\ T_n<\infty} &= \sum_{s_0\cdots s_m\in \mathcal D_n}\pp s {\reset\sigma,\tau_n} {T_n^2<\infty\ \vert\ s_0\cdots s_m} \pp s {\reset\sigma,\tau_n} {s_0\cdots s_m\ \vert\ T_n<\infty}\\
   &=\sum_{s_0\cdots s_m\in \mathcal D_n}\pp {s_m} {\reset\sigma,\tau_n[s_0\cdots s_m]} {T_0<\infty} \pp s {\reset\sigma,\tau_n} {s_0\cdots s_m\ \vert\ T_n<\infty}\\
&=\sum_{s_0\cdots s_m\in \mathcal D_n}\pp {s_m} {\sigma,\tau_n[s_0\cdots s_m]} {T_0<\infty} \pp s {\reset\sigma,\tau_n} {s_0\cdots s_m\ \vert\ T_n<\infty},
  \end{align*}
  where in the last equality we have replaced the reset strategy by $\sigma$, because these two strategies are the same up to the first drop. Since $m>n$, by construction the strategy $\tau_n[s_0\cdots s_m]$ is locally optimal, consequently applying \Cref{lem:bounded away 1} gives 
  \begin{align*}
    \pp {s_m} {\sigma,\tau_n[s_0\cdots s_m]} {T_0<\infty}\le 1-c,
  \end{align*}
  which when plugged into the equation above proves \eqref{eq:two drops}. 
\end{proof}

In the third lemma there is no restriction upon the strategy $\tau$.

\begin{lemma}
  \label{lem:general <1}
  For all strategies $\tau$ and $s$ there is some $n\in\nat$ such that
  \begin{align*}
    \pp s {\reset\sigma,\tau}{\text{there is a drop after date $n$}}<1. 
  \end{align*}
\end{lemma}
\begin{proof}
  Fix a strategy $\tau$ and a state $s$. Let $T$ be the stopping time that gives the date of the last action that was played that is not value-conserving, if it exists, otherwise let it be $\infty$. Since the strategies $\tau$ and $\tau_n$ coincide on all paths where the last action that is not value-conserving is played before $n$ (that is on the event $T<n$), then for all $n\in\nat$ and events $\mathcal E$ we have:
  \begin{align*}
    \pp s {\reset\sigma,\tau}{\mathcal E}=\pp s {\reset\sigma,\tau} {T<n}\cdot \pp s {\reset\sigma,\tau_n} {\mathcal E\ \vert\ T<n}+\pp s {\reset\sigma,\tau}{T\ge n}\cdot \pp s {\reset\sigma,\tau} {\mathcal E\ \vert\ T\ge n}.
  \end{align*}
  The strategy $\sigma$ has been assumed to be locally optimal, and therefore the strategy $\reset\sigma$ is locally optimal as well. As a consequence of \Cref{prop:value-conserving} we have
  \begin{align*}
    \lim_{n\to\infty} \pp s {\reset\sigma,\tau} {T<n}=1,
  \end{align*}
  whence follows
  \begin{align*}
    \lim_{n\to\infty}\pp s {\reset\sigma,\tau_n} {\mathcal E}= \pp s {\reset\sigma,\tau}{\mathcal E},
  \end{align*}
  for any event $\mathcal E$. The proof of the lemma now concludes by choosing the event ``there is a drop after date $n$'' for $\mathcal E$, a suitable natural number $n$ and applying \Cref{lem:tau n}. 
\end{proof}

This lemma makes it possible now to prove the third property of the strategy $\sigma$, namely that for all strategies $\tau$ and $s$,
\begin{align}
  \label{eq:third property}
  \pp s {\reset\sigma,\tau} {\exists n\ \ \text{no drops after date $n$}} = 1. 
\end{align}
Let $T$ be the stopping time that gives the date of the last drop, if it exists otherwise let it be equal to $\infty$. For a natural $n$, let $F_n$ be the stopping time that gives the date of the first drop after $n$ (same as $T_n$ in the proof of \Cref{lem:tau n}) if it exists, otherwise say that it is equal to $\infty$. 

Fix $\delta>0$ and choose the strategy $\tilde\tau$ and state $\tilde s$ such that
\begin{align}
  \label{eq:sup}
  \sup_{\tau,s}\pp {s} {\reset\sigma,\tau} {T=\infty}\le \pp {\tilde s} {\reset\sigma,\tilde\tau} {T=\infty}+\delta. 
\end{align}

Let $\tilde n\in\nat$ the number from the statement of \Cref{lem:general <1}, thus
\begin{align}
  \label{eq:def d}
  d\defeq \pp {\tilde s} {\reset\sigma,\tilde\tau} {F_{\tilde n} <\infty} < 1. 
\end{align}
And from \eqref{eq:sup}, some basic properties of expectations we deduce:
\begin{align*}
  \pp {\tilde s} {\reset \sigma,\tilde\tau} {T=\infty} & = \ee {\tilde s} {\reset\sigma,\tilde \tau} {\pp {\tilde s }{\reset\sigma,\tilde\tau}{T=\infty\ \vert\ F_{\tilde n},S_0,\ldots,S_{F_{\tilde n}}}}\\
                                                       & = \ee {\tilde s} {\reset\sigma,\tilde \tau} {\mathds{1}_{F_{\tilde n}<\infty}\cdot \pp {\tilde s }{\reset\sigma,\tilde\tau}{T=\infty\ \vert\ F_{\tilde n},S_0,\ldots,S_{F_{\tilde n}}}}\\
                                                       & = \ee {\tilde s} {\reset\sigma,\tilde \tau} {\mathds{1}_{F_{\tilde n}<\infty}\cdot \pp {S_{F_{\tilde n}}}{\reset\sigma,\tilde\tau[S_0\cdots S_{F_{\tilde n}}]}{T=\infty}}\\
                                                       & \le \ee {\tilde s} {\reset\sigma,\tilde \tau} {\mathds{1}_{F_{\tilde n}<\infty}\cdot \left(\pp {\tilde s}{\reset\tau,\tilde \tau}{T=\infty}+\delta\right)}\\
  & = d\cdot \left(\pp {\tilde s}{\reset\sigma,\tilde \tau}{T=\infty}+\delta\right).
\end{align*}
The random variable $S_{F_{\tilde n}}$ is well-defined because we are measuring the infinite plays where $F_{\tilde n}$ is finite; on the third equality we have used the definition of the reset strategy and the last two (in)equalities we have used \eqref{eq:sup} and \eqref{eq:def d} respectively. Since $d<1$ then we have
\begin{align*}
  \pp {\tilde s}{\reset\sigma, \tilde \tau}{T=\infty} \le \frac d {1-d}\delta,
\end{align*}
so for all states $s'$ and strategies $\tau'$ it follows that
\begin{align*}
  \pp {s'} {\reset\sigma,\tau'} {T=\infty} \le  \sup_{\tau,s} \pp s {\reset\sigma,\tau} {T=\infty} 
                                           \le \pp {\tilde s} {\reset\sigma, \tilde \tau} {T=\infty} +\delta \le \frac \delta {1-d}. 
\end{align*}
Since this holds for any $\delta>0$, \eqref{eq:third property} follows. 

\subsubsection{$\epsilon$-Optimal}
\label{sub:eps opt}
The last property of $\sigma$ that we have to prove is that if we assume that it has the previous properties, namely that it is $\epsilon$-optimal, locally optimal, and it has finitely many drops, then the reset strategy $\reset\sigma$ is $\epsilon$-optimal as well.
So fix an $\epsilon>0$ and a strategy $\sigma$ that is both locally optimal and $\epsilon$-optimal, and for which \eqref{eq:third property} holds. We define for all naturals $n$, strategies $\reset \sigma_n$ that reset only up to date $n$, and prove that they are $\epsilon$-optimal first. 

Define $\mathfrak T_n$ to be the function that truncates finite plays to length $n$:
\begin{align*}
  \mathfrak T_n (s_0\cdots s_m) \defeq \begin{cases}
    s_0\cdots s_m\ &\text{ if }m\le n,\\
    s_0\cdots s_n\ &\text{ otherwise.}
  \end{cases}
\end{align*}
The reset strategy that resets only up to date $n$ is then defined as:
\begin{align*}
  \reset\sigma_n(s_0\cdots s_m)\defeq\sigma(s_\ell\cdots s_m),
\end{align*}
where
\begin{align*}
  \ell\defeq \Lambda(\epsilon,\sigma)\left(\mathfrak T_n(s_0\cdots s_m)\right). 
\end{align*}
\begin{lemma}
  \label{lem:sigma n}
  For all $n\in\nat$, $\reset\sigma_n$ is $\epsilon$-optimal.
\end{lemma}
\begin{proof}
  The proof is by induction on $n$. The base case is trivial since $\reset\sigma_0=\sigma$, therefore assume that the lemma is true for $n-1$, we prove that it is also true for $n$. Namely we fix a state $s$ and a strategy $\tau$ and prove that
  \begin{align*}
    \ee s {\reset\sigma_n,\tau} f \ge \val(s)-\epsilon. 
  \end{align*}
  Denote by $\mathcal E$ the event that there is a drop at date $n$, and by $\mathfrak D$ the set of finite plays of length $n$ that are drops. Then we have
  \begin{align*}
    \ee s {\reset\sigma_n,\tau} f &= \ee s {\reset\sigma_n,\tau} {\mathds{1}_{\mathcal E}\cdot f} + \ee s {\reset\sigma_n,\tau} {\mathds{1}_{\neg\mathcal E}\cdot f}\\
                                  &= \sum_{s_0\cdots s_n\in \mathfrak D}\pp s {\reset\sigma_n,\tau} {s_0\cdots s_n}\cdot \ee {s_n} {\sigma,\tau[s_0\cdots s_n]} f + \ee s {\reset\sigma_n,\tau} {\mathds{1}_{\neg\mathcal E}\cdot f}\\
    &\ge \sum_{s_0\cdots s_n\in \mathfrak D}\pp s {\reset\sigma_n,\tau} {s_0\cdots s_n}\cdot \left(\val(s_n)-\epsilon\right)+ \ee s {\reset\sigma_n,\tau} {\mathds{1}_{\neg\mathcal E}\cdot f},
  \end{align*}
  where in the second equality we have used the definition of $\reset\sigma_n$ and in the inequality the $\epsilon$-optimality of $\sigma$. Define the strategy $\tau'$ to be the strategy that plays like $\tau$ except if there is a drop at date $n$, in which case it resets to a $\epsilon/2$-response called $\tau''$. Then we have
  \begin{align*}
    \ee s {\reset\sigma_{n-1},\tau'} f &= \ee s {\reset\sigma_{n-1},\tau'} {\mathds{1}_{\mathcal E}\cdot f}+\ee s {\reset\sigma_{n-1},\tau'} {\mathds{1}_{\neg\mathcal E}\cdot f}\\
                                    &= \sum_{s_0\cdots s_n\in\mathfrak D}\pp s {\reset\sigma_{n-1},\tau'} {s_0\cdots s_n}\cdot \ee {s_n} {\reset\sigma_{n-1}[s_0\cdots s_n],\tau''} f +\ee s {\reset\sigma_{n-1},\tau} {\mathds{1}_{\neg\mathcal E}\cdot f}\\
    &\le \sum_{s_0\cdots s_n\in\mathfrak D}\pp s {\reset\sigma_{n-1},\tau'} {s_0\cdots s_n}\cdot \left(\val(s_n)-2\epsilon+\epsilon/2\right)+\ee s {\reset\sigma_{n-1},\tau} {\mathds{1}_{\neg\mathcal E}\cdot f}.
  \end{align*}
  Now since the strategies $\reset\sigma_{n-1}$ and $\reset\sigma_n$ on one hand, and strategies $\tau$ and $\tau'$ on the other, behave the same for all plays of length smaller than $n$ and on infinite plays where there is no drop at date $n$, it follows that the right-most terms in the two inequalities above, as well as the factors $\ppdd$ on the left are equal. Consequently we can combine the two inequalities above to conclude that
  \begin{align*}
    \ee s {\reset\sigma_{n-1},\tau'} f \le \ee s {\reset \sigma_n,\tau} f. 
  \end{align*}
  This concludes the induction step and the proof of the lemma.
\end{proof}

We now prove that
\begin{align}
  \label{eq:reset epsilon optimal}
  \text{$\reset\sigma$ is $\epsilon$-optimal},
\end{align}
the final property of $\sigma$ given in the beginning of this section.

Let $m$ respectively $M$ be a lower bound, respectively upper bound of the payoff function. Define $T$ to be the stopping time that is equal to the date of the last drop if it exists otherwise it is equal to $\infty$. 

Applying \Cref{lem:sigma n} we have that for all $n\in\nat$, $s$, and $\tau$
\begin{align*}
  \val(s)-\epsilon \le \ee s {\reset\sigma_n,\tau} {\mathds{1}_{T\le n}\cdot f} + \ee s {\reset\sigma_n,\tau} {\mathds{1}_{T>n}\cdot f}
  \le \ee s {\reset\sigma_n,\tau} {\mathds{1}_{T\le n}\cdot f} + M\cdot \pp s {\reset\sigma_n,\tau} {T>n}.
\end{align*}
Since $\reset\sigma$ and $\reset\sigma_n$ behave the same on the plays in the event $T\le n$, we have that for all $n\in\nat$, $\tau$ and $s$
\begin{align*}
  \ee s {\reset\sigma,\tau} f - \ee s {\reset\sigma,\tau} {\mathds{1}_{T>n}\cdot f} &= \ee s {\reset\sigma,\tau}{\mathds{1}_{T\le n}\cdot f}= \ee s {\reset\sigma_n,\tau}{\mathds{1}_{T\le n}\cdot f}\\
  &\ge \val(s)-\epsilon-M\cdot \pp s {\reset\sigma_n,\tau} {T>n}.
\end{align*}
The strategies $\reset\sigma_n$ and $\reset\sigma$ behave the same on plays in the event $T\le n$, and therefore also on those in the event $T>n$; consequently we can write
\begin{align*}
  \pp s {\reset\sigma_n,\tau} {T>n} = \pp s {\reset\sigma,\tau} {T>n}, 
\end{align*}
and from the inequality above we have:
\begin{align*}
  \ee s {\reset\sigma,\tau} f \ge \val(s)-\epsilon - (M-m)\cdot \pp s {\reset\sigma,\tau} {T>n}.
\end{align*}
From the sigma-additivity of measures and the property in \eqref{eq:third property} it follows that
\begin{align*}
  \lim_{n\to\infty}(M-m)\cdot \pp s {\reset\sigma,\tau} {T>n} = 0. 
\end{align*}
Since $\tau$ and $s$ are general, this proves $\epsilon$-optimality of $\sigma$, that is it proves the final property \eqref{eq:reset epsilon optimal}. \Cref{lem:2e subgame} in conjunction with \eqref{eq:reset epsilon optimal} implies \Cref{thm:subgameperfect}.

\subsubsection{Remark on Optimal Strategies}

Martin's theorem, \Cref{thm:martin} implies that the games that we are interested in have $\epsilon$-optimal strategies for every $\epsilon>0$. We have then proved that there are locally optimal (2) strategies that are also $\epsilon$-optimal (1). We then showed that for strategies with properties (1) and (2), we can prove that they also posses the properties (3) and (4), which respectively stated that there are finitely many drops and that the reset strategy is also $\epsilon$-optimal. By inspection, in the proofs of 
\begin{align*}
  (1)\text{ and }(2)\qquad &\Rightarrow \qquad (3),\\
  (1), (2)\text{ and } (3)\qquad &\Rightarrow \qquad (4), 
\end{align*}
in \Cref{sub:finitely many drops} and \Cref{sub:eps opt} respectively, the variable $\epsilon$ need not be strictly positive. Since optimal strategies are necessarily locally optimal, the following lemma follows from \Cref{lem:2e subgame}.


We can summarize our results for $\epsilon=0$ or $\epsilon>0$ as:
\begin{lemma}\label{lem:subgame}
Let $\game$ be a game equipped with a shift-invariant payoff function.
Let $\epsilon \geq 0$ be a non-negative real number
and $\sigma$ be an $\epsilon$-optimal strategy in $\game$.
Assume that $\sigma$ is locally optimal.
Then the reset strategy $\hat\sigma$ is $2\epsilon$-subgame-perfect in $\game$. 
\end{lemma}

\section{Half-Positional Games}
\label{sec:main theorem}

We prove the main theorem:
\main*
Neither of the conditions in the statement is necessary, as we saw from the examples given in~\Cref{sec:applications}. Necessary and sufficient conditions for positionality are known for deterministic games~\cite{concur}. However the shift-invariant and submixing conditions are general enough to recover several known classical results, and to provide several new examples of games with deterministic stationary optimal strategies. Before we proceed with the proof we remark:
\begin{remark}
  \label{rem:inverse}
  A symmetric proof to that of \Cref{thm:main}, the subject of this section, can be used to prove a statement like that of \Cref{thm:main}, where Player~1 is replaced by Player~2 and submixing is replaced by inverse-submixing. A corollary of this is that games with shift-invariant, submixing and inverse-submixing payoff functions are positional. 
\end{remark}

Consider a game $\game$ fulfilling the conditions of the theorem. The proof proceeds by induction on the actions of the maximizer, that is on the quantity
\begin{align*}
  N(\game)\defeq \sum_{s\in\states_1}\left(|\actions(s)|-1\right).
\end{align*}
It proceeds by removing more and more actions of the maximizer and showing that at every step the value has not decreased, until we are left with a single choice from every state that belongs to the maximizer. The unique choice will then be the positional optimal strategy. 

If $N(\game)=0$ there is no choice for maximizer, hence he has a deterministic and stationary optimal strategy. 
If $N(\game)>0$ there must be a state $\tilde s\in\states$ such that Player~1 has at least two actions in $\tilde s$, \ie $\actions(\tilde s)$ has at least two elements. We split the game $\game$ in two strictly smaller subgames $\game_1$ and $\game_2$.

\begin{definition}[Split of a game]
Let $\game$ be a game with $N(\game)>0$ and $\tilde s\in\states$ a state of $\game$
controller by Player~1 in which there are at least two actions available,
\ie $\actions(\tilde s)$ has at least two elements. 
 Partition $\actions(\tilde s)$ into two non-empty sets: $\actions_1$ and $\actions_2$.  Let $\game_1$ and $\game_2$ be the games obtained from $\game$ by restricting the actions in the state $\tilde s$ to $\actions_1$ and $\actions_2$ respectively. 
 Then $(\game_1,\game_2)$ is called a \emph{split} of $\game$ on $\tilde s$.
 \end{definition}

The induction step relies on the two results stated in the next theorem.
The first result says that the value of $\tilde s$ in the original game cannot be larger than that of the restricted games.
The second result shows that Player $1$ can play optimally in $\game$
by selecting one of the subgames and play optimally in it.
\begin{theorem}\label{thm:fundamentalineq}
Let $\game$ be a game equipped with a payoff function that is shift-invariant and submixing.
Let $(\game_1,\game_2)$ a \emph{split} of $\game$ on $\tilde s$.
Then
\begin{align}
  \label{eq:max}
  \val(\game)(\tilde s) = \max\set{\val(\game_1)(\tilde s),\val(\game_2)(\tilde s)}. 
\end{align}
Assume moreover that $\val(\game_1)(\tilde s)\geq \val(\game_2)(\tilde s)$.
Then, for every $s\in\states$,
\begin{align}
  \label{eq:last goal}
  \val(\game)(s)=\val(\game_1)(s). 
\end{align}
\end{theorem}

Theorem~\ref{thm:main} is a simple corollary of Theorem~\ref{thm:fundamentalineq}.
\begin{proof}[Proof of \Cref{thm:main}]
The proof is by induction on $N(G)$.
If $N(\game)=0$ there is no choice for maximizer, hence he has a deterministic and stationary optimal strategy.
If $N(\game)>0$ then we choose a split $(\game_1,\game_2)$ of $\game$
on a pivot state $\tilde s$.
By symmetry, we can choose a split such that $\val(\game_1)(\tilde s)\geq \val(\game_2)(\tilde s)$.
Then, according to~\eqref{eq:last goal} in Theorem~\ref{thm:fundamentalineq}, a strategy for Player $1$
which is optimal in $\game_1$ is also optimal in $\game$.
By induction hypothesis,
there exists a positional optimal strategy in $\game_1$,
thus $\game$ is half-positional.
\end{proof}

The rest of the section is dedicated to the proof of Theorem~\ref{thm:fundamentalineq}.
We fix a game $\game$ 
and a split $(\game_1,\game_2)$ of $\game$ on the state $\tilde s$.
The inequality 
\[
\val(\game)(\tilde s) \geq \max\set{\val(\game_1)(\tilde s),\val(\game_2)(\tilde s)}
\]
is clear, since Player~1 has more choice in $\game$ than he has in $\game_1$ and $\game_2$.
We witness the converse inequality with a strategy for Player~2, called the \emph{merge} strategy,
which merges two $\epsilon$-subgame-perfect strategies in the respective smaller games.
This is done in Section~\ref{subsec:merge}. 
The definition of the merge strategy hinges on the projection of plays in the main game $\game$
to plays in the restricted games $\game_1$ and $\game_2$, which is done in section~\ref{subsec:proj}.
Then we analyse  the two possible outcomes: (a) after some date the play remains only in game $\game_1$ (or only in game $\game_2$), (b) the play switches infinitely often between the two smaller games.
This analysis is performed in sections~\ref{subsec:stay} and ~\ref{subsec:switch}.
For the latter case (b) we use the submixing property to show 
that Player $1$ cannot get a better payoff by switching between the two smaller games
that he could get by staying in one of the subgames.

\subsection{Projecting a play in $\game$ to a couple of plays in the subgames}
\label{subsec:proj}

There is a natural way
to project a play $h$ of the game $\game$ starting in $\tilde s$ to a couple of plays
$h_1$ and $h_2$ in the restricted games $\game_1$ and $\game_2$ respectively,
starting from $\tilde s$ as well.
The two projections are computed simultaneously and inductively.
Initially, $h=\tilde s$ and both projections $h_1$ and $h_2$ are also equal to $\tilde s$.
Each step of the play in $\game$
is appended to either $h_1$ or $h_2$, depending on the action $a$ played 
the last time the state $\tilde s$ was visited: if $a$ belongs to $\actions_1$
then the new step is appended to $h_1$, otherwise it is appended to $h_2$.
The computation of $h_1$ and $h_2$ is illustrated on Figure~\ref{fig:proj}.

\begin{figure}
\begin{center}
\includegraphics[width=.9\textwidth]{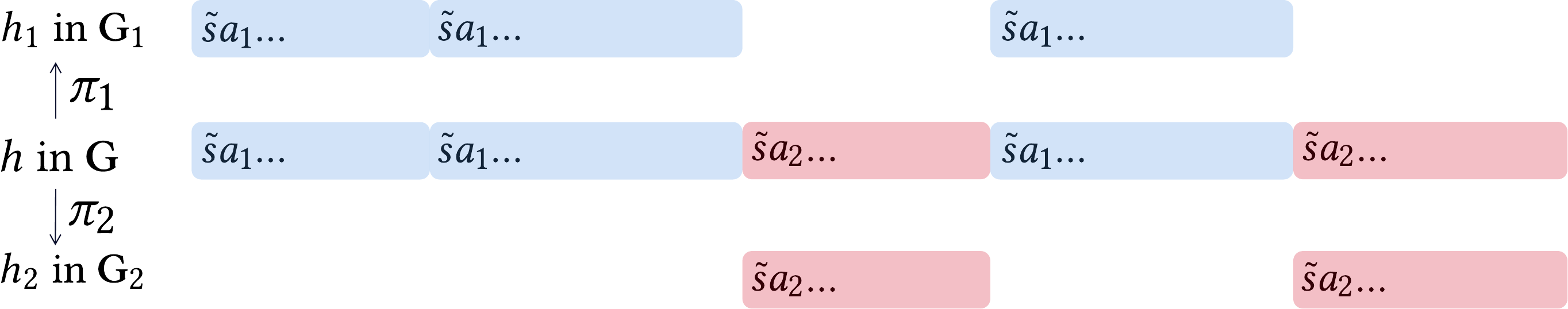}
\end{center}
\caption{\label{fig:proj}The play $h$ is the concatenation of finite plays starting in $\tilde s$,
represented by blocks whose colours depend on the first action played after $\tilde s$,
blue if the action belongs to $\actions_1$ and pink if it belongs to $\actions_2$.
The projection $h_1=\pi_1(h)$ in $\game_1$ is the concatenation of the blue blocks while $h_2=\pi_2(h)$
is the concatenation of the pink blocks.
The projections lose some information about the play in $\game$: swapping
two contigous blocks of different colors in $h$ does not modify
the projections $h_1$ and $h_2$.
}
\end{figure}

Formally, we define two maps $\pi_1$, $\pi_2$ from finite plays in $\game$
starting from $\tilde s$ 
to finite plays in $\game_1$ and $\game_2$ respectively,
starting from $\tilde s$ as well.
Let $h = s_0a_0s_1\ldots s_n$  be a finite play in $\game$ starting in $\tilde s$
and $ha s$ a continuation of $h$ in $\game$, with one more transition $(s_n,a,s)$.
Let $\last(has)$ be the action played in $has$ after the last visit to $\tilde s$ i.e.
\[
\last(has) = a_{\max \{ j \in 0\ldots n \ \mid \ s_j = \tilde s  \}}
=
\begin{cases}
a & \text{ if $s_n=\tilde s$}\\
\last(h) & \text{ otherwise.}
\end{cases}
\]
Then
\[
\pi_1(has)
=
\begin{cases}
\pi_1(h)a s & \text{if $\last(has ) \in \actions_1$}\\
\pi_1(h) & \text{if $\last(has ) \in \actions_2$}\enspace.
\end{cases}
\]
And $\pi_2$ is defined symmetrically with respect to $\actions_1$ and $\actions_2$.

This definition can be extended to infinite plays in a natural way.
Let $h=s_0a_0s_1\ldots$ be an infinite play in $\game$ starting in $\tilde s$.
Then $\pi_1(h)$ is the limit of the sequence
\begin{align*}
 \left(\pi_1(s_0a_0s_1\ldots s_n)\right)_{n\in\nat}\enspace.
\end{align*}
The projection $\pi_1(h)$ can be either finite or infinite,
depending whether the play ultimately stays in $\game_2$ or not.
If after some time the last action chosen in $\tilde s$ is always in $\actions_2$,
all subsequent moves in $\game$ are appended to the projection in $\game_2$,
while the projection to $\game_1$ never gets updated and stays finite.

\subsection{Linking the payoff in $\game$ to the payoffs in the subgames}
The  payoff in $\game$ can be related to the payoff in the subgames. 
We introduce the events
\begin{align*}
\stayntwo&\defeq\{
\forall m\geq n, \last(S_0A_0\ldots S_mA_{m}S_{m+1})\in\actions_2
\}\\
\staytwo&\defeq \bigcup_{n\in \nat} \stayntwo
\enspace.
\end{align*}
If $\stayntwo$ holds, we say that the play \emph{stays in $\game_2$ after step $n$}
whereas if $\staytwo$ holds, we say that the play \emph{ultimately stays in $\game_2$}.

Those two events can be described equivalently as a non-update
of the projection to $\game_1$ after some point.
For that, we make use of the random variables:
\begin{align*}
 \Pi \defeq S_0A_0S_1\cdots
 \qquad
 \Pi_1 \defeq \pi_1(S_0A_0S_1\cdots),\qquad \Pi_2 \defeq \pi_2(S_0A_0S_1\cdots).
\end{align*}
Recall that $S_n$ and $A_n$ are the random variables
which output respectively the $n$-th state $s_n$ and action $a_n$ when the play is
$s_0a_0s_1a_1\cdots$.
We see that $\Pi$ is simply the identity map outputing the play in $\game$
while
$\Pi_i$ is essentially equivalent to $\pi_i$, it is a random variable that maps the infinite play
in game $\game$ to its finite or infinite projection in game $\game_i$.
Then
\begin{align*}
&\stayntwo
=\{\Pi_1 = \pi_1(S_0A_1\cdots S_n)\}\\
&
\staytwo
=\{\Pi_1 \text{ is finite}\}
\enspace.
\end{align*}

The events $\staynone$ and $\stayone$ are defined symmetrically.
Define the event
\[
\switch \defeq \left(\neg \stayone \land \neg \staytwo\right) = \{ \text{ both $\Pi_1$ and $\Pi_2$ are infinite } \}\enspace.
\]
The following lemma shows that the payoff in $\game$ is tightly related to the payoffs in the subgames
$\game_1$ and $\game_2$.
\begin{lemma}\label{rem:proj}
Let $f$ be a prefix-independent and submixing payoff function.
Every infinite play in $\game$ belongs to exactly one of the three events $\{\stayone,\staytwo,\switch\}$.
Moreover,
\begin{align}
\label{eq:finiterem}
&\text{if } \stayone \text{ holds then } f(\Pi) = f(\Pi_1)\enspace.\\
\label{eq:finiterem2}
&\text{If } \staytwo \text{ holds then }   f(\Pi) = f(\Pi_2)\enspace.\\
\label{eq:infiniterem2}
&\text{If } \switch \text{ holds then }  f(\Pi) \leq \max(\ f(\Pi_1)\ ,\ f(\Pi_2)\ )
\enspace.
\end{align}
\end{lemma}
\begin{proof}
Since both projections in $\game_1$ and $\game_2$ cannot be finite at the same time
then
$(\stayone,\staytwo,\switch)$ is a partition of the infinite plays in $\game$.
If $\Pi_1$ is finite then $\Pi$ and $\Pi_2$ share an infinite suffix and the prefix-independence of $f$ implies~\eqref{eq:finiterem}.
The case where $\Pi_2$ is finite is symmetric, hence~\eqref{eq:finiterem2}.
If both $\Pi_1$ and $\Pi_2$ are infinite then the sequence of actions
$(\last(S_0\ldots S_nA_{n}S_{n+1}))_{n\in\nat}$
switches infinitely often between $\actions_1$ and $\actions_2$ thus $\tilde s$ 
is visited infinitely often.
Moreover, in this case $\Pi$ is a shuffle of $\Pi_1$ and $\Pi_2$
and since $f$ is submixing,~\eqref{eq:infiniterem2} follows.
\end{proof}

\subsection{The Merge Strategy}
\label{subsec:merge}

In light of Lemma~\ref{rem:proj}, it is intuitively clear that to play well in $\game$,
Player $2$ has to play well in both subgames $\game_1$ and $\game_2$.
Fix $\epsilon>0$.
The \emph{merge strategy}
for Player $2$ is the composition of two 
strategies $\topt_1$ and $\topt_2$
for Player~2 in the subgames
$\game_1$ and $\game_2$ respectively.
We require $\topt_1$ and $\topt_2$ to be
$\epsilon$-subgame-perfect in the corresponding subgames;
their existence is guaranteed by \Cref{thm:subgameperfect}.

\begin{definition}
 The merge strategy $\topt$ is a strategy in $\game$ for Player $2$
 which ensures
that $\Pi_1$ is consistent with $\topt_1$
and
 $\Pi_2$ is consistent with $\topt_2$ when the play starts from $\tilde s$.
Let $h$ be a finite play in $\game$ from $\tilde s$ and ending in a state controlled by Player $2$,
then
\begin{align*}
\topt(h) = \begin{cases}
  \topt_1(\pi_1(h)) & \text{ if\ \   $\last(h)\in \actions_1$\enspace,}\\
  \topt_2(\pi_2(h)) & \text{ if\ \  $\last(h)\in \actions_2$}\enspace.
\end{cases}
\end{align*}
\end{definition}
The merge strategy is well-defined because if $\last(h)\in\actions_1$ then both $h$ and $\pi_1(h)$ end with the same state,
controlled by Player $2$.

In the next two sections,
we show that the merge strategy
guarantees to Player $2$ some upper-bounds on the expected payoffs,
which reflect the bounds given in Lemma~\ref{rem:proj} for payoffs of individual plays.

\subsection{On plays consistent with the merge strategy and ultimately staying in $\game_2$}
\label{subsec:stay}


In this section,
we show that in case the play ultimately stays in $\game_2$,
then the expected payoff is upper-bounded by
$\val(\game_2)(\tilde s) + \epsilon$.

For simplicity, we require $\epsilon$ to be small enough so that
$\tau_2$ does not select any \emph{value-increasing action},
in the following sense.
\begin{lemma}
In $\game_2$, fix a state $s$ controlled by Player $2$
and an action $a$  available in that state.
Denote 
\[
\delta(s,a) = \left(\sum_{t\in S} p(s,a,t) \val(\game_2)(t)\right) - \val(\game_2)(s)\enspace.
\]
Then $\delta(s,a)\geq 0$.

In case $\delta(s,a) > 0$ then
$a$ is said to be \emph{value-increasing} in $s$.
In that case, if moreover $\epsilon$ is strictly smaller that $\delta(s,a)$,
then $\topt_2$ never selects the action $a$ in a play ending in state $s$.
\end{lemma}
\begin{proof}
Since the payoff function is prefix-independent,
and $s$ is controlled by Player $2$,
then $\delta(s,a)\geq 0$, because after Player $2$
chooses $a$ in $s$, he can proceed with 
an
$\epsilon'$-optimal strategy from the states $t$ such that $p(s,a,t)>0$,
for an arbitrary $\epsilon'>0$.
Assume $\epsilon$  strictly smaller that $\delta(s,a)$.
Then $\topt_2$ 
never selects $a$ in state $s$, otherwise this would contradict
the $\epsilon$-subgame perfection of $\tau_2$:
Player $1$ could proceed with some
$(\delta(s,a) - \epsilon)/2$-optimal strategy in $\game_2$
and get an expected payoff strictly greater than $\val(\game_2)(s)+\epsilon$.
\end{proof}

\begin{lemma}
  \label{lem:fin}
Assume that $f$ is prefix-independent and
 $\epsilon$ is small enough
to guarantee that $\topt_2$ never selects any value-increasing action.
Let $\sigma$ be a strategy for Player $1$ in $\game$
such that
$\pp{\tilde s}{\sigma,\topt}{  \staytwo} > 0$.
Then
  \begin{align}
    \label{eq:sfin}
    \ee{\tilde s}{\sigma,\topt}{f  \mid   \staytwo} 
    \leq \val(\game_2)(\tilde s) + \epsilon
    \enspace.
  \end{align}
\end{lemma}

\begin{proof}
The first ingredient of the proof is the sequence
of random variables $(V_n)_{n\in \nat}$,
where $V_n$ denotes the value in $\game_2$ of the last vertex of $\pi_2(S_0A_1\cdots S_n)$. 
Since the play starts in state $\tilde s$, 
\[
V_0 = \val(\game_2)(\tilde s)\enspace.
\] 
The value of $V_n$ does not change unless the projection
of the play to $\game_2$ via $\pi_2$ does.
Since $\Pi_2$ is consistent with $\topt_2$
and since 
$\topt_2$ never selects any value-increasing action,
\[
(V_n)_{n\in \nat}\text{ is a super martingale}\enspace.
\]



The second ingredient in the proof is a stopping time $T$,
defined as follows.
For every finite play $h=s_0a_0\ldots s_n$  in $\game$
starting in $\tilde s$ and consistent with $\sigma$ and $\topt$,
denote 
\[
\phi(h) = \pp{\tilde s}{\sigma,\topt}{\stayntwo \mid h\text{ is a prefix of the play}}\enspace.
\]
Fix some $\epsilon'>0$ and denote
$T$ the stopping time 
\[
T = \min \left\{ n \in \nat \mid  \phi(S_0A_0\ldots S_n) \geq 1 - \epsilon' \right\}\enspace,
\]
with the usual convention $\min(\emptyset) = \infty$.

\medskip

We use the event $\{T < \infty \}$ as an approximation of the event $\staytwo$ by proving
  \begin{align}
  \label{eq:approx2}
&    \pp{\tilde s}{\sigma,\topt}{ \staytwo \mid T < \infty} \geq 1 - \epsilon'    \\
  \label{eq:approx1}
&    \pp{\tilde s}{\sigma,\topt}{T < \infty\mid \staytwo} = 1
    \enspace.
  \end{align}
  The inequality~\eqref{eq:approx2} holds because 
  by definition of $\phi$, for every $n\in\nat$,
  \[
  \pp{\tilde s}{\sigma,\topt}{ \stayntwo \mid T = n} \geq 1 - \epsilon' \enspace.
  \]
%
%
We show~\eqref{eq:approx1}.
Fix $\epsilon''>0$.
By definition of $\staytwo$,
there exists $n_1\in\nat$ such that
\begin{align}
\label{eq:epsilonpp}
\pp{\tilde s}{\sigma,\topt}{
\stayntwog{n_1}\mid \staytwo 
  }\geq 1 - \epsilon''\enspace.
 \end{align}
According to L\'evy's 0-1 law (see \eg \cite[Theorem 14.4]{wiliamspr}), the sequence of random variables
$\left(\ee{\tilde s}{\sigma,\topt}{\stayntwog{n_1} \mid S_0,\ldots , S_n}\right)_{n\in\nat}$
almost-surely converges to the indicator function 
${\bf 1}_{\stayntwog{n_1}}$.
Thus, 
\[\pp{\tilde s}{\sigma,\topt}{
\exists n_2 \geq n_1,
\ee{\tilde s}{\sigma,\topt}{\stayntwog{n_1} \mid S_0,\ldots , S_{n_2}}
\geq 1 - \epsilon'
\ \vert \ 
\stayntwog{n_1}
}=1\enspace.
\]
Since $n_2\geq {n_1}$ implies $\stayntwog{n_2}\subseteq \stayntwog{n_1}$,
\[\pp{\tilde s}{\sigma,\topt}{
\exists n_2 ,
\phi(S_0,\ldots,S_{n_2})
\geq 1 - \epsilon'\mid
\stayntwog{n_1}
}=1\enspace.
\]
Equivalently,
\[
\pp{\tilde s}{\sigma,\topt}{T < \infty\mid \stayntwog{n_1}}=1\enspace. 
\]
and with~\eqref{eq:epsilonpp} we get
\[
\pp{\tilde s}{\sigma,\topt}{T < \infty\mid \staytwo}\geq 1 - \epsilon''\enspace. 
\]
This holds for every $\epsilon''>0$, hence~\eqref{eq:approx1}.

\medskip 
Since $\epsilon'>0$ can be chosen arbitrarily small,
then according to~\eqref{eq:approx2} and~\eqref{eq:approx1},
to show our goal~\eqref{eq:sfin}, it is enough to establish:
  \begin{align}
    \label{eq:sfinbis}
    \ee{\tilde s}{\sigma,\topt}{f  \mid   T < \infty} 
    \leq \val(\game_2)(\tilde s) + \epsilon + 2\epsilon' \cdot ||f||_\infty 
    \enspace.
  \end{align}
  This is well-defined,
  because~\eqref{eq:approx1} ensures
$\pp{\tilde s}{\sigma,\topt}{T < \infty}\geq \pp{\tilde s}{\sigma,\topt}{\staytwo}>0$,
and $f$ is bounded.

Since $(V_n)_{n\in \nat}$ is a bounded super martingale,
we can deduce from Doob's Forward Convergence Theorem {\cite[Theorem 11.5]{wiliamspr}}
that $(V_n)_{n\in \nat}$ converges almost-surely.
We denote $V_T$ the random variable equal to 
$(\lim_n V_{n})$ if $T=\infty$ and $V_n$ if $T=n$.

We deduce~\eqref{eq:sfinbis}
from the following three inequalities:
\begin{align}
\label{eq:fin1}
&\ee{\tilde s}{\sigma,\topt}{V_T} \leq  \val(\game_2)(\tilde s)
\\
\label{eq:fin3}
&
\ee{\tilde s}{\sigma,\topt}{V_T \mid T = \infty} 
=
\val(\game_2)(\tilde s)
\\
\label{eq:fin2}
&\ee{\tilde s}{\sigma,\topt}{f \mid T < \infty} 
\leq
\ee{\tilde s}{\sigma,\topt}{V_T \mid T < \infty} + \epsilon
+ 2\epsilon' \cdot ||f||_\infty 
\enspace.
\end{align}
Assuming~\eqref{eq:fin1} and~\eqref{eq:fin3} do hold,
then $\ee{\tilde s}{\sigma,\topt}{V_T \mid T < \infty} 
\leq
\val(\game_2)(\tilde s)$. Injecting this inequality in~\eqref{eq:fin2},
we get~\eqref{eq:sfinbis}, and
    the lemma is proved.

\medskip 
We prove the three inequalities~\eqref{eq:fin1}-~\eqref{eq:fin2}.
The inequality~\eqref{eq:fin1}
is obtained using
the equality $V_0 = \val(\game_2)(\tilde s)$
and applying
 Doob's Optional Stopping Theorem
(Section 10.10 in \cite{wiliamspr})
to the bounded super-martingale
$(V_n)_{n\in\nat}$ and the stopping time $T$,
which implies
$\ee{\tilde s}{\sigma,\topt}{V_T} \leq V_0$.

To prove~\eqref{eq:fin3},
we prove an even stronger statement:
\[
\pp{\tilde s}{\sigma,\topt}{V_T = \val(\game_2)(\tilde s) \mid T = \infty} 
= 1\enspace.
\]
If $T=\infty$ then, according
to~\eqref{eq:approx1},
the event $\staytwo$ does not hold.
Thus,
according to Lemma~\ref{rem:proj},
either $\stayone$ or $\switch$ holds.
In the first case,
$(V_n)_{n\in\nat}$ is ultimately constant equal to $\val(\game_2)(\tilde s)$.
 In the second case, the play $\Pi_2$ visits $\tilde s$ infinitely often.
 Since $(V_n)_n$ converges almost-surely to $V_T$ then $V_T=\val(\game_2)(\tilde s)$.

Finally, we prove~\eqref{eq:fin2}. Denote $h_T$ the random variable
defined when $T$ is finite, which outputs the 
prefix of the play of length $T$, i.e. 
\[
h_T = S_0A_0\ldots S_T\enspace,
\]
and let $h$ such that $\pp{\tilde s}{\sigma,\topt}{h_T=h} > 0 $.
Denote $t$ the last state of $h$.
Let $\sigma_0$ be strategy in $\game_2$ which coincides with $\sigma[h]$
as long as the play stays in $\game_2$.
Then:
\begin{align*}
\ee{\tilde s}{\sigma,\topt}{f \mid h_T=h }
&=
\ee{\tilde s}{\sigma,\topt}{f \mid h \text{ is a prefix of the play}}\\
&=
\ee{ t}{\sigma[h],\topt[h]}{
f  }
\\
&\leq
\ee{ t}{\sigma_0,\topt[h]}{
f  }
+ 2\epsilon' \cdot ||f||_\infty
\\
&=
\ee{ t}{\sigma_0,\topt_2[\pi_2(h)]}{
f  }
+ 2\epsilon' \cdot ||f||_\infty
\\
&\leq
\val(\game_2)(t) + \epsilon
+ 2\epsilon' \cdot ||f||_\infty
\\
&=
\ee{\tilde s}{\sigma,\topt}{
V_T\mid h_T=h} + \epsilon
+ 2\epsilon' \cdot ||f||_\infty
\enspace.
\end{align*}
The first equality holds because $\pp{\tilde s}{\sigma,\topt}{h_T=h} > 0 $
thus no strict prefix $h'$ of $h$ satisfies $\phi(h') \geq 1-\epsilon$,
and if $h$ is a prefix of the play then $h_T=h$.
The second equality holds by prefix-independence of $f$.
The first inequality holds because  $\phi(h) \geq 1 - \epsilon'$ thus the strategies $\sigma[h]$ 
and $\sigma_0$ coincide with probability $\geq 1 - \epsilon'$,
and when they do not the payoff difference is at most $2||f|||_\infty$.
The third equality holds because $\tau[h]$ coincides with $\topt_2[\pi_2(h)]$ when
the play stays in $\game_2$.
The second inequality is by $\epsilon$-subgame optimality of $\topt_2$ in $\game_2$.
The last equality holds by definition of $V_T$.

Since this holds for every possible value $h$ of $h_T$ when $T<\infty$,
and there are at most countably many such values,
the inequality~\eqref{eq:fin2} follows.
\end{proof}

\subsection{On plays consistent with the merge strategy and switching infinitely often between the two subgames}
\label{subsec:switch}
In this section,
we provide an upper-bound on the payoff of plays
which switch infinitely often between $\game_1$ 
and $\game_2$.

\begin{lemma}\label{lem:key}
  \label{lem:sinfty}
  Assume that $f$ is prefix-independent and submixing.
  For all strategies $\sigma$,
\begin{align}
  \label{eq:sinfty}
  \pp{\tilde s}{\sigma,\topt}{f \le \max\set{\val(\game_1)(\tilde s),\val(\game_1)(\tilde s)} + \epsilon\ \vert\ 
  \switch}=1.
\end{align}
\end{lemma}

\begin{proof}
By definition of $\switch$,
if $\switch$ occurs
then both projections $\Pi_1$ and $\Pi_2$ are infinite
and visit $\tilde s$ infinitely often.
According to the inequality~\eqref{eq:infiniterem2} in \Cref{rem:proj},
to prove~\eqref{eq:sinfty} it is enough to show,
for every $i\in \{1,2\}$,
  \begin{align}
    \label{eq:key}
   & \pp{\tilde s}{\sigma,\topt}{f(\Pi_i) \le \val(\game_i)(s)+ \epsilon\ \vert\ \Pi_i\text{ is infinite and reaches $\tilde s$ infinitely often}}=1
  \enspace.
  \end{align}

  By symmetry, it is enough to show~\eqref{eq:key} when $i=1$.
  For that, we define a strategy $\sigma_1$ in $\game_1$ such that for every measurable event $\mathcal E_1$ in the game $\game_1$,
\begin{align}
  \label{eq:events}
  \pp{\tilde s}{\sigma_1,\topt_1}{\mathcal E_1}\ge \pp{\tilde s}{\sigma,\topt}{\text{$\Pi_1$ is
  infinite and }\Pi_1\in \mathcal E_1 }.
\end{align}
Denote by $\pref$ (respectively $\spref$) the prefix relation (respectively strict prefix) over finite or infinite plays.
The strategy $\sigma_1$ in $\game_1$ is defined as:
\begin{align*}
  \sigma_1(h_1)(a) = \pp{\tilde s}{\sigma,\topt}{h_1a\pref\Pi_1\ \vert\ h_1 \spref\Pi_1},
\end{align*}
if $\pp{\tilde s}{\sigma,\topt}{h_1 \spref \Pi_1}>0$ and otherwise $\sigma_1(h_1)$ is chosen arbitrarily. 
The event $h_1 \spref\Pi_1$ means that not only $h_1$ appears as a prefix
of the projection of the play on $\game_1$,
but moreover at least one more action has been played in $\game_1$ after that,
so $\sigma_1$ is equivalently defined as 
\[  \sigma_1(h_1)(a) = \pp{\tilde s}{\sigma,\topt}{h_1a\pref\Pi_1\ \vert\ \exists b \in \actions, h_1 b \pref\Pi_1}\enspace.
\]
Remark that in general, $\sigma_1$ is a mixed strategy.

We proceed with the proof of~\eqref{eq:events}.
Let $\mathfrak E$ be the set of measurable events $\mathcal E_1$\
in $\game_1$ for which \eqref{eq:events} holds. 
We prove first that $\mathfrak E$ contains all cylinders
$h_1(\states\actions)^\omega$ of $\game_1$,
which relies on the following inequalities:
\begin{align}
  \label{eq:sigma1}\pp{\tilde s}{\sigma_1,\topt_1}{h_1}& \ge \pp{\tilde s}{\sigma,\topt}{h_1\pref\Pi_1}\\
  \notag &\geq \pp{\tilde s}{\sigma,\topt}{\Pi_1\text{ is infinite and }\Pi_1 \in h_1(\states\actions)^\omega}.
\end{align}
We abuse the notation and denote $h_1$ the event $\{h_1 \text{ is a prefix of the play}\}$.
The second inequality is by definition of prefixes.
The inequality~\eqref{eq:sigma1} is proved by induction on the length of $h_1$.
When $h_1$ is the single initial state $\tilde s$ then both terms in~\eqref{eq:sigma1} are equal to $1$, and the inequality is an equality. 
Let $h_1 a r$ be a finite play in $\game_1$ and assume that~\eqref{eq:sigma1} holds for $h_1$.
There are two cases, depending who controls the last state of $h_1$, denoted $t$.
In case $t$ is controlled by Player $1$ then
\begin{align}
\notag\pp{\tilde s}{\sigma_1,\topt_1}{h_1ar} &=
\pp{\tilde s}{\sigma_1,\topt_1}{h_1} \cdot \sigma_1(h_1)(a)\cdot \psmall{t,a}{r} \\
\notag&\geq \pp{\tilde s}{\sigma,\topt}{h_1\pref\Pi_1}\cdot \sigma_1(h_1)(a) \cdot\psmall{t,a}{r} \\
\notag&= \pp{\tilde s}{\sigma,\topt}{h_1\pref\Pi_1}\cdot\pp{\tilde s}{\sigma,\topt}{h_1a\pref\Pi_1\ \vert\ h_1 \spref\Pi_1}\cdot\psmall{t,a}{r}\\
\label{eq:sneq}&\geq \pp{\tilde s}{\sigma,\topt}{h_1\spref\Pi_1}\cdot\pp{\tilde s}{\sigma,\topt}{h_1a\pref\Pi_1\ \vert\ h_1 \spref\Pi_1}\cdot\psmall{t,a}{r}\\
\notag&= \pp{\tilde s}{\sigma,\topt}{h_1a\pref\Pi_1} \cdot\psmall{t,a}{r}\\
\notag&= \pp{\tilde s}{\sigma,\topt}{h_1ar\pref\Pi_1}\enspace,
\end{align}
where the first and last equalities hold by definition of the probability measure,
the first inequality by induction hypothesis
and the second equality is by definition of $\sigma_1$.
The second inequality~\eqref{eq:sneq} holds because the event $h_1\spref\Pi_1$ is contained in the event 
$h_1\pref\Pi_1$. This inclusion and the corresponding inequality
might be strict: for example if $\Stay_{\geq 0}(\game_2)$ holds, i.e. if the play always stay in $\game_2$,
then the event $\tilde s\spref\Pi_1$ has probability $0$ while the event
$\tilde s\pref\Pi_1$ has probability $1$.

Now we prove inequality~\eqref{eq:sigma1},
in case $t$ is controlled by Player $2$.
For every finite play $h'_1$ in $\game_1$,
denote 
$C(h'_1)$ the set of finite plays $h'$ in $\game$ starting in $\tilde s$ and such that  
$\pi_1(h')=h'_1$ and $\last(h')\in \actions_1$.
Equivalently, $h'$ belongs $C(h'_1)$ if and only $\pi_1$ projects $h'$ on $h'_1$, but no strict prefix of $h'$
is projected on $h'_1$.

\begin{align*}
\pp{\tilde s}{\sigma_1,\topt_1}{h_1ar} &=
\pp{\tilde s}{\sigma_1,\topt_1}{h_1} \cdot \topt_1(h_1)(a)\cdot \psmall{t,a}{r} \\
&\geq \pp{\tilde s}{\sigma,\topt}{h_1\pref\Pi_1}\cdot \topt_1(h_1)(a) \cdot\psmall{t,a}{r} \\
&= 
\sum_{h' \in C(h_1)}
\pp{\tilde s}{\sigma,\topt}{h'}\cdot \topt_1(h_1)(a) \cdot\psmall{t,a}{r}\\
&= 
\sum_{h' \in C(h_1)}
\pp{\tilde s}{\sigma,\topt}{h'}\cdot \topt(h')(a) \cdot\psmall{t,a}{r}\\
&= 
\sum_{h' \in C(h_1)}
\pp{\tilde s}{\sigma,\topt}{h'ar}\\
&= 
\sum_{h'' \in C(h_1ar)}
\pp{\tilde s}{\sigma,\topt}{h''}\\
&= \pp{\tilde s}{\sigma,\topt}{h_1ar\pref\Pi_1}\enspace.
\end{align*}
The first equality is by definition of the probability measure.
The inequality is by induction hypothesis.
The second equality holds because the event
$h_1\pref\Pi_1$ is the disjoint union of the events $(h')_{h'\in C(h_1)}$:
if the projection of an infinite play $h$ to $\game_1$ starts with $h_1$,
then there is a single prefix of this play in $C(h_1)$,
this is the shortest (finite) prefix of $h$ whose projection in $\game_1$ is $h_1$.
The last equality holds by a similar argument.
The third equality is by definition of $\topt$.
The fourth equality is by definition of the probability measure.
To show the fifth equality, we establish $C(h_1ar)=\{h'ar\mid h'\in C(h_1)\}$.
We start with the inclusion $\{h'ar\mid h'\in C(h_1)\} \subseteq C(h_1ar)$.
Let $h'\in C(h_1)$. Since $\pi_1(h')=h_1$ and $\last(h')\in \actions_1$
then $h'$ and $h_1$ have the same last state, i.e. $t$.
And $t$ is controlled by Player $2$,
hence $t \neq \tilde s$. Thus $\last(h'ar)=\last(h') \in \actions_1$
and $\pi_1(h'ar)=\pi_1(h')ar$.
For the converse inclusion $C(h_1ar) \subseteq \{h'ar\mid h'\in C(h_1)\}$
take $h''\in C(h_1ar)$ and write $h''=h'a'r'$, where $a'$ and $r'$ are the last action and state of $h''$.
Since $\last(h'')\in\actions_1$
then $\pi_1(h'')=\pi_1(h')a'r' $.
Since $h''\in C(h_1ar)$ then $\pi_1(h'')=h_1ar$,
hence $\pi_1(h')=h_1$ and $a=a'$ and $r=r'$
hence $h''\in C(h_1)ar$. 
This completes the proof of the inequality~\eqref{eq:sigma1}.

Observe that $\mathfrak E$ is stable by finite disjoint unions, hence $\mathfrak E$ contains all finite disjoint unions of cylinders, which forms a boolean algebra.  Moreover $\mathfrak E$ is a monotone class, so we can apply the monotone class theorem (see for example \cite[Theorem 3.4]{billingsleyprob}).
This implies that $\mathfrak E$ contains the sigma-field that is generated by cylinders, which by definition is the set of all measurable events in the game $\game_1$. This completes the proof of~\eqref{eq:events}.

\medskip

Next we prove that
\begin{align}
  \label{eq:levi}
  \pp{\tilde s}{\sigma_1,\topt_1}{f \le \liminf_n \val(\game_1)(S_n)+ \epsilon}=1.  
\end{align}
Observe that due to the fact that $\topt_1$ is $\epsilon$-subgame-perfect and that $f$ is shift-invariant,
then
for all $n\in\nat$,
\begin{align*}
  \ee {\tilde s}{\sigma_1,\topt_1}{f\ \vert\ S_0,A_0,\ldots,S_n}=\ee{S_n}{\sigma_1[S_0\cdots S_n],\topt_1[S_0\cdots S_n]} f\le \val(\game_1)(S_n)+\epsilon\enspace,
\end{align*}
and as a consequence,
\begin{equation}
\label{eq:liminf}
\liminf_n \ee {\tilde s}{\sigma_1,\topt_1}{f\ \vert\ S_0,A_0,\ldots,S_n} \le \liminf_n \val(\game_1)(S_n)+\epsilon\enspace.
\end{equation}
According to L\'evy's 0-1 law (see \eg \cite[Theorem 14.4]{wiliamspr}),
the sequence of random variables:
$(\ee {\tilde s} {\sigma_1,\topt_1} {f\ \vert\ S_0,A_0,\ldots,S_n})_{n\in\nat}$
 converges point-wise to the random~variable~$f(S_0A_0S_1\cdots)$.
As a consequence the left handside of~\eqref{eq:liminf} is almost-surely equal to $f$
and we get~\eqref{eq:levi}. 

Denote $\mathcal E_1$ the event
\[
\mathcal E_1
=\{
f > \val(\game_1)(\tilde s)+ \epsilon \text{ and $\tilde s$ is reached infinitely often}
\}\enspace.
\]
According to~\eqref{eq:levi},
$\pp{\tilde s}{\sigma_1,\topt_1}{\mathcal E_1}=0$.
We apply~\eqref{eq:events} to $\mathcal E_1$ and get
\[
\pp{\tilde s}{\sigma,\topt}{\text{$\Pi_1$ is
  infinite and }\Pi_1\in \mathcal E_1 }=0\enspace.
  \]
  By definition of $\mathcal E_1$,
  this last equality is equivalent to~\eqref{eq:key} with $i=1$.
\end{proof}

\subsection{Proof of  Theorem~\ref{thm:fundamentalineq}}

\begin{proof}[Proof of Theorem~\ref{thm:fundamentalineq}]
To prove the first statement~\eqref{eq:max} in Theorem~\ref{thm:fundamentalineq},
we combine the two lemmas
proved in the two previous sections in order to show:
\begin{align}
\label{eq:sall}
\forall \sigma, \ee{\tilde s}{\sigma,\topt}{f} \leq
  \max\set{\val(\game_1)(\tilde s),\val(\game_2)(\tilde s)}+ \epsilon\enspace.
\end{align}
The bound~\eqref{eq:sall} can be obtained as follows.
According to Lemma~\ref{rem:proj},
the three events $\stayone,\staytwo$ and $\switch$ partition
the set of infinite plays.
In case $\stayone$ occurs, Lemma~\ref{lem:fin} guarantees that the expected payoff is no more than 
$\val(\game_1)(\tilde s)+\epsilon$.
By symmetry, in case $\staytwo$ occurs,
the expected payoff is no more than 
$\val(\game_2)(\tilde s)+\epsilon$.
And in case $\switch$ occurs,
\Cref{lem:sinfty} guarantees that the payoff is almost-surely
no more than $\max\set{\val(\game_1)(\tilde s), \val(\game_2)(\tilde s)} + \epsilon$.
Thus~\eqref{eq:sall} holds.
The inequality 
\[
\val(\game)(\tilde s) \geq \max\set{\val(\game_1)(\tilde s),\val(\game_2)(\tilde s)}
\]
is clear, since Player~1 has more choice in $\game$ than he has in $\game_1$ and $\game_2$.
And $\epsilon$ can be chosen
arbitrarily small in~\eqref{eq:sall}, hence the first statement~\eqref{eq:max}
of \Cref{thm:fundamentalineq}.

\medskip

We proceed with the second statement of 
Theorem~\ref{thm:fundamentalineq}.
Assume that
\begin{align}
  \label{eq:max 1}
\val(\game_1)(\tilde s)\geq \val(\game_2)(\tilde s)\enspace. 
\end{align}
We have to show~\eqref{eq:last goal}, i.e.
\begin{align*}
\forall s \in \states,
  \val(\game)(\tilde s)=\val(\game_1)(\tilde s)\enspace.
\end{align*}
According to~\eqref{eq:max}, we already now that this equality holds for $\tilde s$,
and we shall extend it to all states $s\in\states$.

Recall that the merge strategy was defined only for plays that start in state $\tilde s$; we enlarge this definition, profiting from the assumption \eqref{eq:max 1}.
First, extend the definition of $\last(h)$ to any play $h$ that has visited $\tilde s$ at least once,
in which case $\last(h)$ denotes the action that is played right after the last visit of $h$ to $\tilde s$.
 Second, for all finite plays $h$ that end in a state controlled by Player $2$, 
\begin{align*}
\topt(h) \defeq \begin{cases}
  \topt_1(\pi_1(h)) & \text{ if $h$ never visited $\tilde s$ or  $\last(h)\in \actions_1$ }\\
  \topt_{2}(\pi_{2}(h)) & \text{ if $h$ has visited $\tilde s$ at least once and $\last(h)\in \actions_{2}$}.
\end{cases}
\end{align*}
The merge strategy is well-defined because if $h$ never visited $\tilde s$ or if $\last(h)\in\actions_1$ then both $h$ and $\pi_1(h)$ end with the same state, controlled by Player $2$. And if $h$ has visited $\tilde s$ at least once and  $\last(h)\in\actions_2$ then both $h$ and $\pi_2(h)$ end with the same state, controlled by Player $2$.

We prove that $\topt$ guarantees a payoff smaller than $\val(\game_1)(s)+\epsilon$ for every state $s$. Fix $\sigma$ a strategy for Player~1 in $\game$, and define $\sigma'$ to be the strategy that plays like $\sigma$ as long as the play does not reach the pivot state $\tilde s$.
Whenever the pivot state is reached, the strategy $\sigma'$ switches definitively to a strategy $\sopt_1$
that is optimal in the game $\game_1$, whose existence is guaranteed by the induction hypothesis. The plays consistent with $\sigma'$ and $\topt$ stay in the subgame $\game_1$. Since $\topt$ coincides with $\topt_1$ on plays staying in $\game_1$, and since $\topt_1$ is $\epsilon$-optimal in $\game_1$, we can write for all $s\in S$: 
\begin{align}
  \label{rq:rrr}    
  \ee s {\sigma',\topt} f=\ee s {\sigma',\topt_1} f\le\val(\game_1)(s) +\epsilon\enspace.
\end{align}
Let $h$ be a finite play that is consistent with $\sigma$ and $\topt$,
whose last state is $\tilde s$ and which does not visit $\tilde s$ before the last step.
Then
\begin{align*}
  \ee s {\sigma,\topt} {f\ \vert \ h} &= \ee {\tilde s} {\sigma[h],\topt[h]} f\\
  & \le \val(\game_1)(\tilde s)+\epsilon\\
  & \le \ee {\tilde s}{\sopt_1,\topt_1[h]} f +\epsilon\\
  &=\ee {\tilde s}{\sigma'[h],\topt_1[h]} f +\epsilon\\
  & = \ee s {\sigma',\topt} {f\ \vert\ h}+\epsilon\enspace.
\end{align*}
The first and third equalities hold because $f$ is prefix-independent.
The first inequality holds because the strategy $\topt[h]$ is $\epsilon$-optimal from state $\tilde s$,
for the following reason.
The strategy $\topt[h]$ coincides with the strategy obtained by merging $\topt_1[h]$ and $\topt_2$ on the pivot state $\tilde s$, both of which are $\epsilon$-subgame-perfect in the respective subgames. Since~\eqref{eq:sall} was proved for any merge of two $\epsilon$-subgame-perfect strategies, we can apply~\eqref{eq:sall} to the strategy $\topt[h]$, and conclude that the latter is $\epsilon$-optimal from state $\tilde s$.
The second inequality holds because $\sopt_1$ is optimal in $\game_1$.
The second equality holds because $\sigma'[h]=\sopt_1$.
Finally
\begin{align}\label{eq:gogo}
  \ee s {\sigma,\topt} {f\ \vert \ h} \leq \ee s {\sigma',\topt} {f\ \vert\ h}+\epsilon\enspace.
\end{align}

Since the strategies $\sigma$ and $\sigma'$ coincide on those plays that never reach $\tilde s$,
and~\eqref{eq:gogo} holds for all finite plays reaching $\tilde s$ for the first time,
then for every $s\in S$, 
\begin{align*}
  \ee s {\sigma,\topt} f \le \ee s {\sigma',\topt} f +\epsilon. 
\end{align*}
By using \eqref{rq:rrr} now we have that for all $s$,
\begin{align}
  \label{eq:merge is 2e optimal}
  \ee s {\sigma,\topt} f \le \val(\game_1)(s)+2\epsilon. 
\end{align}
This holds for  every strategy $\sigma$ and $\epsilon>0$ arbitrarily small, 
thus $\val(\game)(s)\leq \val(\game_1)(s)$. The converse inequality is obvious,
because Player $1$ has more freedom in $\game$ than in $\game_1$,
hence the second statement~\eqref{eq:last goal} 
of \Cref{thm:fundamentalineq}.
\end{proof}

\paragraph{Remarks about the merge strategy.}
We observe a byproduct of the proof of~\Cref{thm:fundamentalineq}, namely that \eqref{eq:merge is 2e optimal} has yielded $2\epsilon$-optimality of the merge strategy:
\begin{observation}
  \label{ob:merge is 2e optimal}
  The merge strategy $\topt$ constructed with $\epsilon$-subgame-perfect pieces is $2\epsilon$-optimal in the game $\game$.
\end{observation}

After this observation, since the merge strategy is obtained by merging two $\epsilon$-subgame-perfect strategies, a natural question to ask is whether $\topt$ is $2\epsilon$-subgame-perfect in the $\game$?
The answer is negative; consider the following simple example:

\includegraphics[width=.9\textwidth]{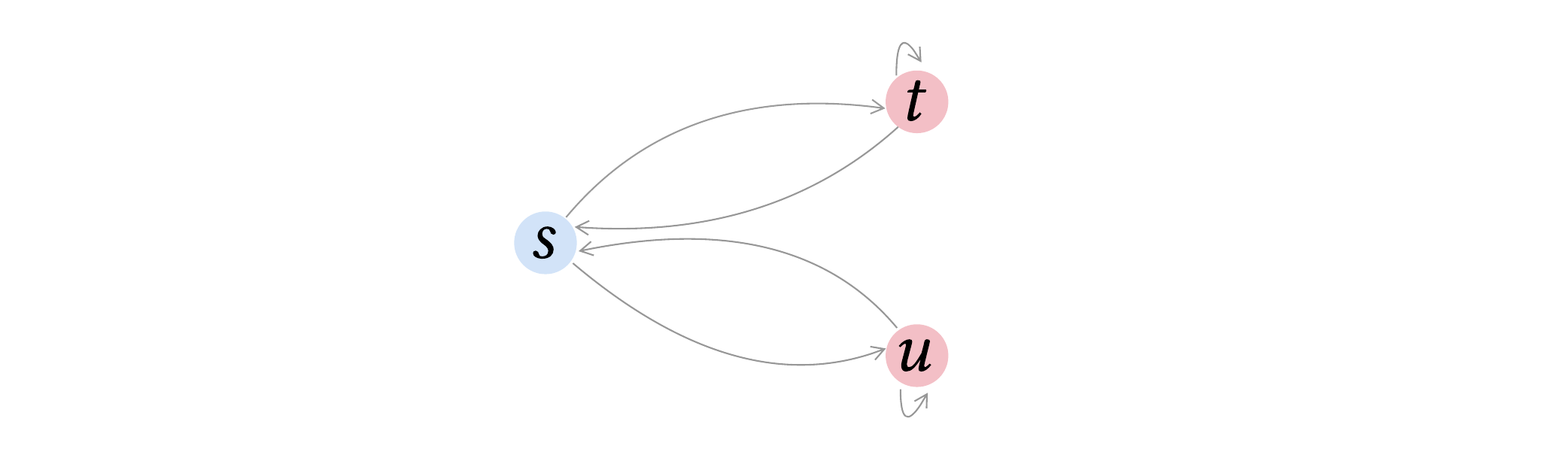}

The goal of Player~1 is to visit the state $t$ infinitely often (say that if he achieves this goal he receives a payoff 1, otherwise 0), and every action is deterministic. The blue states are controlled by Player~1, and the red ones by his opponent. In the subgame $\game_1$ we remove the action $s\to t$. In particular in the game $\game_1$ the positional strategy $\topt_1$ which chooses $u\to s$ and $t\to s$ is subgame-perfect. We can therefore use it to construct a merge strategy $\topt$. However this merge strategy is not $2\epsilon$-subgame-perfect, since in case Player~1 uses the suboptimal action $s\to u$, his opponent does not profit by taking the self-loop forever. 


\section{From One-player Games to Two-player Games}
\label{sec:onetotwo}

The construction of the merge strategy in the previous section reveals that games that are equipped with shift-invariant and submixing payoffs have the following interesting property. While they yield very simple optimal strategies for Player~1, they allow his opponent to recombine strategies that work for one-player games (also known as Markov decision processes) and use them in a two-player game. 

A general result allows to lift the existence of $\epsilon$-optimal strategies
from $\mathcal S$ in one-player games to two-player games.

An arena is said to be \emph{fully controlled by the minimizer} if all states are controlled by Player~2.
Fix a payoff function $f$ that is both shift-invariant and submixing.

%
%

\begin{definition}
Let $\mathcal{S}$ be a class of strategies for minimizer.

Say that the class $\mathcal{S}$ is \emph{stable by the reset operation}
if for every game $\game$ equipped with~$f$
and every strategy $\tau$ of minimizer in $\game$,
if $\tau$ belongs to $\mathcal{S}$ then
the reset strategy $\reset \tau$ belongs to $\mathcal{S}$ as well.

Say that the class $\mathcal{S}$ is \emph{stable by the merge operation}
if for every game $\game$ equipped with~$f$,
for every split $(\game_1,\game_2)$ of $\game$
and for every strategies $\topt_1$ and $\topt_2$ in $\game_1$ and $\game_2$,
if both $\topt_1$ and $\topt_2$ belong
to $\mathcal{S}$
then their merge $\topt$ belongs to $\mathcal{S}$ as well.
\end{definition}

Like in \Cref{prop:finite mem preservation},
say that the arena $\arena'$ is a \emph{restriction} of the arena $\arena$ if one gets $\arena'$ from $\arena$ by erasing some actions from some states.

\begin{theorem}\label{thm:generaltransfer}
Let $f$  a shift-invariant payoff function,
$\mathbb A$  a family of arenas that are closed under restrictions and 
$\mathbb{S}$  a family of strategies for minimizer
which are stable by both reset and merge operations. 

Assume that in every game $(A,f)$ with $A\in\mathbb A$
that is fully controlled by the minimizer,
for every $\epsilon>0$ the minimizer has an $\epsilon$-optimal strategy that belongs to $\mathcal{S}$.
Then, in every two-player game $(A,f)$ with $A\in\mathbb A$,
 the minimizer has an $\epsilon$-subgame perfect strategy that belongs to $\mathbb{S}$.

The statement holds for $\epsilon=0$ as well, that is:
assume that in all games $(A,f)$ with $A\in\mathbb A$ 
that is fully controlled by the minimizer,
the minimizer has an optimal strategy that belongs to $\mathbb{S}$.
Then, in every two-player game $(A,f)$ with $A\in\mathbb A$,
the minimizer has a subgame perfect strategy that belongs to $\mathbb{S}$.
\end{theorem}
\begin{proof}
Let $\game=(A,f)$ with $A\in\mathbb A$.
  The proof of both statements is by induction on $N(\game)$, 
  as in the proof of the main theorem in the previous section. 

  The base of the induction follows from the assumption about games fully controlled by the minimizer, since we can give to the minimizer
  the control of states in which the maximizer has a single action,
  without changing the value of the game.
  
  When $N(\game)>0$, the induction step is performed using a split $(\game_1,\game_2)$ of $\game$ on a pivot state $\tilde s$.
  Remark that both arenas belong to $\mathbb A$,
  therefore the induction hypothesis for the first (resp. the second) statement
  says that for every $\epsilon > 0$,
  there are two strategies
$\topt_1$ and $\topt_2$ in the games $\game_1$ and $\game_2$, respectively,
which belong to $\mathcal S$ and are
$\epsilon$-subgame perfect
(resp. subgame perfect)
  in their respective subgames.
  Since $\mathcal{S}$ is stable by the merge operation,
  then the strategy $\topt$ obtained by merging $\topt_1$ and $\topt_2$
  also belongs to $\mathbb S$.
  
We carry over the induction step for the first (resp. the second) statement.
According to \Cref{ob:merge is 2e optimal}, 
$\topt$ is $2\epsilon$-optimal (resp. is optimal).
We apply \Cref{lem:subgame} to $\topt$ which guarantees that the reset strategy obtained from
$\topt$ is 
$4\epsilon$-optimal (resp. is optimal). 
Moreover by hypothesis this strategy belongs to $\mathcal S$.
\end{proof}

\section{The Finite Memory Transfer Theorem}
\label{sec:memory}

We give the proof of \Cref{thm:memory} that was announced in the introduction.

\memory*

\Cref{thm:memory} follows from \Cref{thm:generaltransfer}
and the following results,
which establish that the class of finite-memory strategies
is stable by the reset (\Cref {prop:finite mem preservation}) 
and merge (\Cref{lem:combifinite}) operations.

\begin{proposition}
  \label{prop:finite mem preservation}
  Let $\mathbb A$ be a family of arenas that are closed under restrictions and $f$ a shift-invariant payoff function. If for games whose arena is in $\mathbb A$ and whose payoff function is $f$, and for every $\epsilon>0$, Player~1 (respectively Player~2) has an $\epsilon$-optimal strategies $\sigma$ with finite memory, then he also has an $\epsilon$-subgame-perfect strategies with finite memory, namely the reset strategies $\reset\sigma$.
  This holds as well for optimal strategies, i.e. if $\epsilon = 0$.
\end{proposition}
\begin{proof}
  Let $\arena\in\mathbb A$ be an arena. Remove the actions of Player~1 that are not locally optimal (with respect to the payoff function $f$) to get a restriction $\arena'$. From the hypothesis, it follows that there are $\epsilon$-optimal strategies in $(\arena',f)$ that have finite memory, and consequently there are $\epsilon$-optimal strategies in $(\arena,f)$ that have finite memory and are locally optimal. According to \Cref{lem:subgame},
  the strategy $\reset\sigma$ is $2\epsilon$-subgame-perfect, and \Cref{prop:reset is finite} implies that it has finite memory. 
\end{proof}

\begin{lemma}\label{lem:combifinite}
Let $(\game_1,\game_2)$ be a split of a game $\game$ on a pivot state $\tilde s$.
Let  $\topt_1$ and $\topt_2$ two strategies for Player~2 in 
$\game_1$ and $\game_2$, respectively.
If both $\topt_1$ and $\topt_2$ have finite-memory then $\topt$ has finite memory as well.
\end{lemma}
\begin{proof}
The strategies $\topt_1$ and $\topt_1$ with finite memory are given by the transducers:
\begin{align*}
 (\mem_1,\init_1,\up_1,\out_1)\qquad \text{and}\qquad(\mem_2,\init_2,\up_2,\out_2),
\end{align*}
for Player 2 in $\game_1$ and $\game_2$ respectively. 

The strategy $\topt$ obtained by merging $\topt_1$ and $\topt_2$ is also a finite-memory strategy, whose memory is
\begin{align*}
 \mem\defeq\set{1,2}\times \mem_1\times \mem_2.
\end{align*}
The initial memory state in state $s$ is $(1,\init_1(s),\init_2(s))$. The updates on the components $\mem_1$ and $\mem_2$ are performed with $\up_1$ and $\up_2$ respectively.  The first component is updated only when the play leaves the pivot state $\tilde s$; it is switched to $1$ or $2$ depending whether Player $1$ chooses an action in $\actions_1$ or $\actions_2$.  The choice of action, or the output,  depends on the first component: in memory state $(b,m_1,m_2)$ the action played by $\topt$ is $\out_b(m_b)$.
\end{proof}

The finite-memory property is preserved when passing from $\sigma$ to $\hat\sigma$
that is if the strategy $\sigma$ has finite memory to begin with, so will the strategy $\reset\sigma$. First we define precisely what we mean by finite memory strategy.

A strategy $\sigma$ is said to have \emph{finite memory} if it is given using a transducer, namely it is a tuple:
\begin{align*}
  \underbrace{\mem}_{\text{a finite set}},\qquad \underbrace{\init\st \states\to\mem}_{\text{memory initialiser}},\qquad \underbrace{\up\st\mem\times\actions\times\states\to\mem}_{\text{update function}},\qquad \underbrace{\out\st\mem\to\distrib\actions}_{\text{output function}}.
\end{align*}
The map $\init$ and $\up$ are used to initialise the memory and update it, as the game unfolds: after the finite play $s_0a_0\cdots s_n$ has unfolded, the transducer reaches the memory state $m_n\in\mem$ which is defined inductively as:
\begin{align*}
  m_0&\defeq \init(s_0), \text{and}\\
  m_k&\defeq \up(m_k,a_{k+1},s_{k+1}). 
\end{align*}
The output function is used to choose the action that the strategy plays, \ie
\begin{align*}
  \sigma(s_0\cdots s_n)=\out(m_n).
\end{align*}

\begin{proposition}
  \label{prop:reset is finite}
  Let $\epsilon>0$.
  If $\sigma$ is a finite memory strategy and is $\epsilon$-optimal
  then the $\epsilon$-reset of $\sigma$ has finite-memory as well.
\end{proposition}
\begin{proof}
  Let $\sigma$ be a finite memory strategy, that is given by the tuple
  \begin{align*}
    (\mem,\init,\up,\out),
  \end{align*}
  and let $\epsilon$ be such that $\sigma$ is $\epsilon$-optimal, which fixes a reset strategy $\reset\sigma$.

  Without loss of generality we can assume that the strategy is such that its memory state identifies the current state in the game, in other words assume that $\mem$ can be partitioned into:
  \begin{align*}
    \mem = \biguplus_{s\in\states}\mem_s,
  \end{align*}
  such that for any finite play $s_0\cdots s_n$, if $m_1,\ldots,m_n$ is the sequence of memory states of the transducer of $\sigma$ during this play, then
  \begin{align*}
    m_n\in\mem_{s_n}.
  \end{align*}

  We gather the subset of memory states where drops occur as follows. For $s\in\states$ and $m\in\mem_s$, denote by $\sigma_m$ the strategy that is the same as $\sigma$ except that the initial memory state for $s$ is $m$ instead of $\init(s)$. Define the subset of memory states where drops occur $\mathcal D\subset\mem$ as
  \begin{align*}
    \mathcal D\defeq \set{m\in\mem_s\st s\in\states\text{ and $\sigma_m$ is not $2\epsilon$-optimal from state $s$}}.
  \end{align*}
  Construct the finite memory strategy $\sigma'$ that avoids the memory states in $\mathcal D$ as follows. For any $s\in\states$ and $m\in\mem_s\cap \mathcal D$, since $\sigma$ is $\epsilon$-optimal, $m\ne \init(s)$. In the strategy $\sigma'$ modify the function $\up$ in such a way that all the transitions that lead to $m$ are redirected to the state $\init(s)$ instead (the memory is reset). Do this simultaneously for any pair $(s,m)$ as above. Comparing the definition of $\reset\sigma$ and $\sigma'$ we conclude that they coincide. 
\end{proof}

\paragraph{On the size of the memory.} How large is the memory $\mem_\game$ needed by Player 2 to play optimally in some $\game=(\mathcal{A},\pay)$?  Every deterministic and stationary strategy $\sigma$ for Player 1 in $\game$ induces a game $\game_\sigma$ that is controlled by Player 2.  Let $\mathfrak M$ be the maximal memory size required by Player 2 to play optimally in the games $G_\sigma$.  According to the proof of the theorem above, the memory $\mem_\game$ needed by Player 2 to play optimally in $\game$ is of size $2\cdot |\mem_{\game_1}| \cdot |\mem_{\game_2}|$. By induction we derive the following bound:
\begin{align*}
  |\mem_\game| \le (2\mathfrak M)^{2^{\sum_s |\actions(s)|}}.
\end{align*}
When $\mathfrak M=1$, \ie when Player 2 has deterministic and stationary strategies in games he controls, then in~\cite{concur} it is shown that the same holds for two player games as well, hence the upper-bound can be downsized to 1. In the general case where $\mathfrak M\ge 2$, we do not have examples where the memory size required by Player 2 to play optimally has the same order of magnitude as the upper bound above.


\section*{Acknowledgments}
We are very grateful to Pierre Vandenhove for finding an error in the previous version of the proof of \Cref{lem:fin}. This work was supported by the ANR projet "Stoch-MC" and the LaBEX "CPU".
\bibliographystyle{alpha} 
\bibliography{2jpos} 
\end{document}